\newtheorem{theorem}{Theorem}[section]
\newtheorem{definition}[theorem]{Definition}
\newtheorem{lemma}[theorem]{Lemma}
\newtheorem{corollary}[theorem]{Corollary}
\theoremstyle{definition}
\newtheorem{example}{Example}[section]
\newcommand{\vpc}{v}				
\newcommand{\vpcprof}{\mathbf{\vpc}}	
\newcommand{\ctr}{p}				
\newcommand{\sctr}{s}			
\newcommand{\m}{m}				
\newcommand{\n}{n}				
\newcommand{\val}{v}				
\newcommand{\reval}{r}			
\newcommand{\revalvec}{\mathbf{\reval}}			
\newcommand{\qt}{t}						
\newcommand{\QT}{T}						
\newcommand{\q}{q}						
\newcommand{\qtavg}{\bar{\qt}}			
\newcommand{\QTavg}{\bar{\QT}}			
\newcommand{\avgq}{\bar{\qt}}			
\newcommand{\ctravg}{\bar{\ctr}}		
\newcommand{\relone}{\ctr_1}			
\newcommand{\relavgone}{\bar{\ctr}_1} 
\newcommand{\reltwo}{\ctr_2}			
\newcommand{\relavgtwo}{\bar{\ctr}_2} 
\newcommand{\SF}{\operatorname{SF}}
\newcommand{\SJ}{\operatorname{SJ}}
\newcommand{\s}{s}				
\newcommand{\vv}{\varphi}		
\newcommand{\ihr}{\lambda}		
\newcommand{\F}{F}				
\newcommand{\f}{f}				
\newcommand{\trnc}{H}			
\newcommand{\shft}{b}			
\newcommand{\M}{M}				
\newcommand{\wt}{\alpha}		
\newcommand{\wtvv}{\vv^{\wt}}	
\newcommand{\welfare}{\operatorname{welfare}}
\newcommand{\revenue}{\operatorname{revenue}}
\newcommand{\myand}{\operatorname{and}}
\newcommand{\perm}{\pi}
\newcommand{\permone}{\perm^1}
\newcommand{\permtwo}{\perm^2}
\begin{document}

\markboth{M. Sundararajan and I. Talgam-Cohen}{Refine Predictions Ad Infinitum?}

\title{Refine Predictions Ad Infinitum?}

\author{Mukund Sundararajan\thanks{Google Inc., Mountain View, CA 94043 USA.}
\and Inbal Talgam-Cohen\thanks{Computer Science Department, Stanford University, Stanford, CA 94305 USA.
The authors wish to thank Amir Najmi for suggesting the problem, Mohammad Mahdian for suggesting that Pareto optimal mechanisms are virtual value-based, and Qiqi Yan for many helpful comments.
}
}

\maketitle

\begin{abstract}
We study how standard auction objectives in sponsored search markets change with refinements in the prediction of the relevance (click-through rates) of ads.
We study mechanisms that optimize for a convex combination of efficiency and revenue. We show that the objective function of such a mechanism can only improve with refined (improved) relevance predictions, i.e., the search engine has no disincentive to perform these refinements. More interestingly, we show that under assumptions, refinements to relevance predictions can only improve the efficiency of any such mechanism. Our main technical contribution is to study how relevance refinements affect the similarity between ranking by virtual-value (revenue ranking) and ranking by value (efficiency ranking). Finally, we discuss implications of our results to the literature on signaling. 
\end{abstract}

\section{Introduction}

Sponsored search is a multi-billion dollar market; it enables contextual advertising, and generates revenue that supports innovation in search algorithms. Besides being important, sponsored search markets are also technically interesting and have been investigated theoretically from several perspectives. For instance, auction theory (cf. ~\cite{agm,ES-10}), game theory (cf. ~\cite{var-06, EOS-07}), and bipartite matching theory (cf. ~\cite{msvv-05}). See~\cite{LPSV07} for a survey.

How do these markets operate? Market efficiency (or value maximization) is achieved by displaying relevant ads that maximize the odds of the user clicking on the impression (a shown ad), and then succesfully transacting on the advertiser's website. To do this, the search engine must acquire two very different types of information. First, it must \emph{estimate} the \emph{relevance} of an advertiser to the user's query, modeled as the probability that that advertiser's ad will receive a click when it is shown to the user. Second, it must \emph{elicit} in an incentive compatible way the \emph{value} that the advertiser has for the user's click; this quantity is determined usually by the the probability of transaction given a visit to the advertiser's site, and the profit per transaction; notice that the search engine is not privy to either quantity. The \emph{realized value} (value-per-impression) in this market is naturally modeled as a \emph{product} of the value-per-click and relevance. Indeed, it is possible that an advertiser with a low value-per-click ($\vpc_1$) and high probability of click ($\ctr_1$) would realize a higher realized value than one with a high value-per-click ($\vpc_2$) but low probability of click ($\ctr_2$), because $\vpc_1\ctr_1 > \vpc_2\ctr_2$.

The elicitation problem mentioned above is naturally modeled via auction theory (cf.~\cite{agm,ES-10}). The goal here is to maximize an auction objective such as efficiency or revenue by eliciting the value-per-click in an incentive compatible way. The estimation problem, however, is most naturally a machine learning problem~\cite{GQBH10}, that is the goal is to \emph{predict} relevance using features of the ad and the query. The relevance predictions are \emph{refined} by either improving the machine learning algorithm 
 or by adding new features. Consider an example of a refinement: Two pizza merchants, one from San Francisco and the other from the nearby city of San Jose, may appear equally relevant (both have say probability $\ctr$) for the query `pizza' emanating from an unspecified location in the Bay Area, but may have antisymmetric relevances on either side of $\ctr$ once the user's location within this region is further pinpointed.

The focus of this paper is to study how standard auction objectives, specifically efficiency, behave with relevance prediction refinements like the one above.\footnote{We study incentive-compatible mechanisms. The mechanisms used in practice, though not incentive compatible, have equilibria that are allocation- and revenue-equivalent to the corresponding incentive compatible mechanisms \cite{EOS-07, ES-10}. So we expect our results to apply to practically used mechanisms \emph{in equilibrium}.}

Conventional wisdom would suggest that refinement ought to have a positive impact on the objective for which the auction is optimal. After all, why should more information hurt? Consistent with this intuition, it has been shown very generally that refinements can only improve the efficiency of the optimally efficient mechanism, or the revenue of the revenue-optimal mechanism~\cite{FJM+12}.%
\footnote{This paper shows that relevance prediction problem is decoupled from the value-per-click elicitation problem. That is, improvements to prediction improve the revenue of the natural Myerson mechanism.}

Things begin to get more interesting when we study changes in the revenue of the optimally efficient mechanism, or the efficiency of the revenue-optimal mechanism due to refinements (indeed, the market maker may wish to optimize any combination of revenue and efficiency - see discussion in Section \ref{sec:pareto-vvbased}). For instance, the revenue of the optimally efficient mechanism can fall with refinements. It is easiest to see this in a single-slot context with two bidders (like our `pizza' example). Recall that the efficient auction allocates the slot to the advertiser with the highest realized value, and charges it the second highest realized value. The refinement suggested in the pizza example above causes relevances to become antisymmetric, and second highest realized value to drop, thereby reducing revenue.

Similarly, we can demonstrate that the efficiency of the revenue-optimal mechanism can fall with refinement -- see Examples~\ref{ex:non-distinguish} and \ref{ex:welfare-regular}. What drives these examples? Recall (or consult Section~\ref{sub:mechanisms}) that the revenue-optimal auction and the efficient auction both rank bidders (ads) by a monotone function of their bids, ignoring ones for which this function is negative, and allocate the remaining bidders in this sequence to the available ad slots. The key difference is that the two mechanisms employ different functions of the bids. In these bad examples, with refinement, the revenue-optimal ranking drifts further apart from the efficient ranking, demonstrating that the twin objectives of revenue and efficiency are not necessarily aligned in the context of refinement.

Our first, comparatively straightforward result (Section~\ref{sec:trade-off}) shows that once the search engine commits to a specific trade-off between efficiency and revenue, the resulting \emph{Pareto optimal} mechanism can only benefit from refinement. Thus, the search engine has no disincentive to perform refinement.%
\footnote{Obviously, this should not be at the cost of using features that violate user privacy.}

Our second, more technically challenging result (Section~\ref{sec:efficiency}) is to identify assumptions under which refinement improves the \emph{efficiency} of \emph{every} Pareto optimal mechanism. The first assumption is that the value-per-click distributions are i.i.d.~and satisfy the monotone hazard rate assumption, a fairly standard assumption. The second assumption is that that refinement causes the relevances of every pair of ads for every query to either reorder or grow further apart; this assumption is arguably restrictive and we discuss it in Section~\ref{sec:non-flip-spread}. We demonstrate that both assumptions are necessary via examples. Our main proof technique is to show that allocation ranking of every Pareto optimal mechanism draws closer to the efficienct ranking as refinements are performed under the assumptions mentioned earlier.

Finally, our results are also applicable to the literature on signaling (cf.~\cite{milweb-82,Boa-09,EFG+12,MS12}). Broadly, the connection can be described as follows: Realized value is a function of relevance, which can be modeled as the seller's (search engine's) private signal, and of the value-per-click, which is the bidder's (advertiser's) private signal. In our model, the seller knows that the bidder's realized value changes multiplicatively (scales) with the seller's signal. Therefore, he can elicit the bidder's signal, perform the transformation and find the bidder's realized value himself. Thus, the standard question in the signaling literature -- how much of his signal should the seller reveal to the bidders -- becomes a question of how refining predictions affects auction objectives. Our conclusion rephrased in the context of signaling is that, as long as realized values change multiplicatively with the signal and the above assumptions hold, revealing information improves both efficiency and revenue. See Section~\ref{sec:signaling} for more detail.

\section{The Model}

\subsection{The Sponsored Search Market}

Position auctions, also known as slot auctions, keyword auctions or sponsored search auctions, are used for selling online advertisement slots that appear next to search results \cite{agm, var-06, EOS-07}. We use a standard model for position auctions \cite{LPSV07}, and modify it slightly to allow us to discuss the effect of refining relevance predictions.

\subsubsection{Position Auctions Model}

In the standard static model for position auctions, the single-dimensional private value of every advertiser is the amount he's willing to pay per click on his ad. Advertisers' values are transformed into values per impression, by multiplying them by click probabilities, known as click-through rates. We assume that the click-through rates are separable \cite{LPSV07}, i.e., can be written as a product of the advertiser's relevance to the query, and the effect of the slot's position on the page. Both components of the click-through rates are predicted by a machine learning system (see, e.g., \cite{GQBH10}). 

We use the following notation. For a search query $\q$, the seller has $\m$ ad slots to sell to $\n$ advertisers. Advertiser $i$ has a private value $\vpc_i\in\mathbb{R}_{+}$, and a non-private click-through rate $\ctr_{(\q,i)}\sctr_{j}$ for slot $j$, where $0<\ctr_{(\q,i)}\le 1$ is the query-advertiser relevance,%
\footnote{Our results hold when $\ctr_{(\q,i)}$ is allowed to have a zero value as well; to simplify the exposition we assume it is strictly positive.%
} and $1\ge\sctr_1\ge\dots\ge\sctr_\m\ge 0$ are the decreasing slot effects. Where the query $\q$ is clear from the context we use $\ctr_i$ to denote the relevance. 

The value per impression of advertiser $i$ for slot $j$ is $\val_{i,j} = \ctr_{i}\sctr_{j}\vpc_{i}$. The search engine ranks advertisers according to the component of $\val_{i,j}$ that is not slot related:

\begin{definition}
[Realized value]
Advertiser $i$'s \emph{realized value} for a slot in query $q$ is
$$
r_{(\q,i)} = \ctr_{(\q,i)}\vpc_i
$$
\end{definition}
An important feature of the model is that the relation between the values and realized values is multiplicative. 

Note that position auctions generalize multi-item auctions, in which one or more units of a single item are sold to unit-demand bidders: by setting $\ctr_{i}=\sctr_{j}=1$ for every $i,j$ we get a standard multi-item auction with $m$ identical units.

\subsubsection{A Model of Relevance Prediction}

As described above, the relevance of each advertiser to a search query is predicted via a machine learning system. We now provide a model of the system's output, which we refer to as a \emph{prediction scheme}.

The machine learning system has access to a collection of search query and advertiser \emph{features}. Possible features include search keywords, geographic location, time, user data and search history, as well as ad text and landing page \cite{GQBH10}. Adopting the standard assumption that features are discretized, consider the set of all possible query-advertiser pairs. These pairs are partitioned into types or \emph{parts}, and for each part, the machine learning system produces an estimate of the (slot-independent) relevance probability $\ctr$.%
\footnote{We also assume that this system produces the slot-specific relevance parameters ($\s_j$'s), but we don't need to specify how this is done because our refinements apply only to the slot-independent terms ($p_i$'s).
}

For example, the following description defines a simple part: ``pizza ad, user located in the Bay Area''. By taking into account additional features, coarse parts can be divided into finer \emph{subparts}, such as ``pizza ad, user located in San Francisco''. This process is called \emph{refining} the partition. 

We denote a partition by $\QT$ and a part by $\qt$, and will often use the convention that $\QTavg$ is a coarse partition and $\qtavg$ a coarse part, whereas $\QT$ is a refined partition and $\qt$ is a subpart. Given any coarse part $\qtavg$, there is a distribution over its subparts $\{\qt\mid\qt\subseteq\qtavg\}$ arising from the underlying distributions of the features.

We can now define a prediction scheme, which is a partition and corresponding relevance predictions learned by the machine learning system:

\begin{definition}
[Prediction scheme]
A \emph{prediction scheme} is a partition $\QT$ of all query-advertiser pairs, and for every $\qt\in\QT$ a relevance prediction $\ctr_{\qt}$ for query-advertiser pairs in part $\qt$.%
\footnote{As defined, a prediction scheme is a deterministic clustering scheme \cite{GNS07}. In general a prediction scheme can also be randomized, containing a distribution over relevance predictions for every part \cite{EFG+12,MS12}. Our results hold for randomized prediction schemes as well.
}%
\end{definition}

Overloading notation we denote both a prediction scheme and its partition by $\QT$.

How is a prediction scheme $\QT$ applied in an auction for a query $\q$? For every query-advertiser pair $(\q,i)$, we say that its relevance prediction $\ctr_{(\q,i)}$ is \emph{according to $\QT$} if $\ctr_{(\q,i)}=\ctr_{\qt}$, where $\qt\in\QT$ is the part to which the pair $(\q,i)$ belongs. The prediction scheme $\QT$ is applied in an auction by setting the advertisers' relevance predictions according to $\QT$, finding the realized values and running the auction on these values.

\subsection{Prediction Refinements}
\label{sub:refine}

An important auction design issue is how finer partitions of advertiser-query pairs may affect auction objectives. As we shall show, this affects the order in which the search engine ranks advertisers who are competing for slots.%
\footnote{We assume that the choice of $T$ does not affect the \emph{bias} of the predictions. Though very fine parts could cause inaccuracies due to lack of data.} This in turn could have a significant effect on the revenue and efficiency of the auction. To study this effect we need a formal definition of prediction refinement.

\begin{definition}
[Refinement]
A prediction scheme $\QT$ is a \emph{refinement} of $\QTavg$ if its partition is a refinement of $\QTavg$'s partition, and the average relevance of all subparts of a coarse part $\qtavg$ is its relevance according to $\QTavg$, i.e., 
$$
\ctr_{\qtavg} = \mathbb{E}_{\qt\subseteq\qtavg}[\ctr_{\qt}].
$$
\end{definition}
If the subpart and its coarse counterpart are clear from context, we use $\ctr$ and $\ctravg$ to denote the corresponding relevance predictions.

We now define a natural class of refinements -- those which distinguish among the advertisers, thus enabling a better matching of advertisements to the query.

\begin{definition}
[Spread or flipped pairs]
\label{def:spread-or-flipped}
A pair $a,b$ is \emph{spread} with respect to a pair $c,d$ if 
\begin{eqnarray*}
\frac{a}{b} \ge \frac{c}{d} \ge 1 & \mbox{or} & 1 \ge \frac{c}{d} \ge \frac{a}{b};
\end{eqnarray*}
$a,b$ is \emph{flipped} with respect to $c,d$ if
\begin{eqnarray*}
\frac{a}{b} \ge 1 \ge \frac{c}{d} & \mbox{or} & \frac{c}{d} \ge 1 \ge \frac{a}{b}.
\end{eqnarray*}
\end{definition}
See Figure \ref{fig:spread-or-flipped} for an example.

\begin{figure}
\begin{centering}
\includegraphics[scale=0.3]{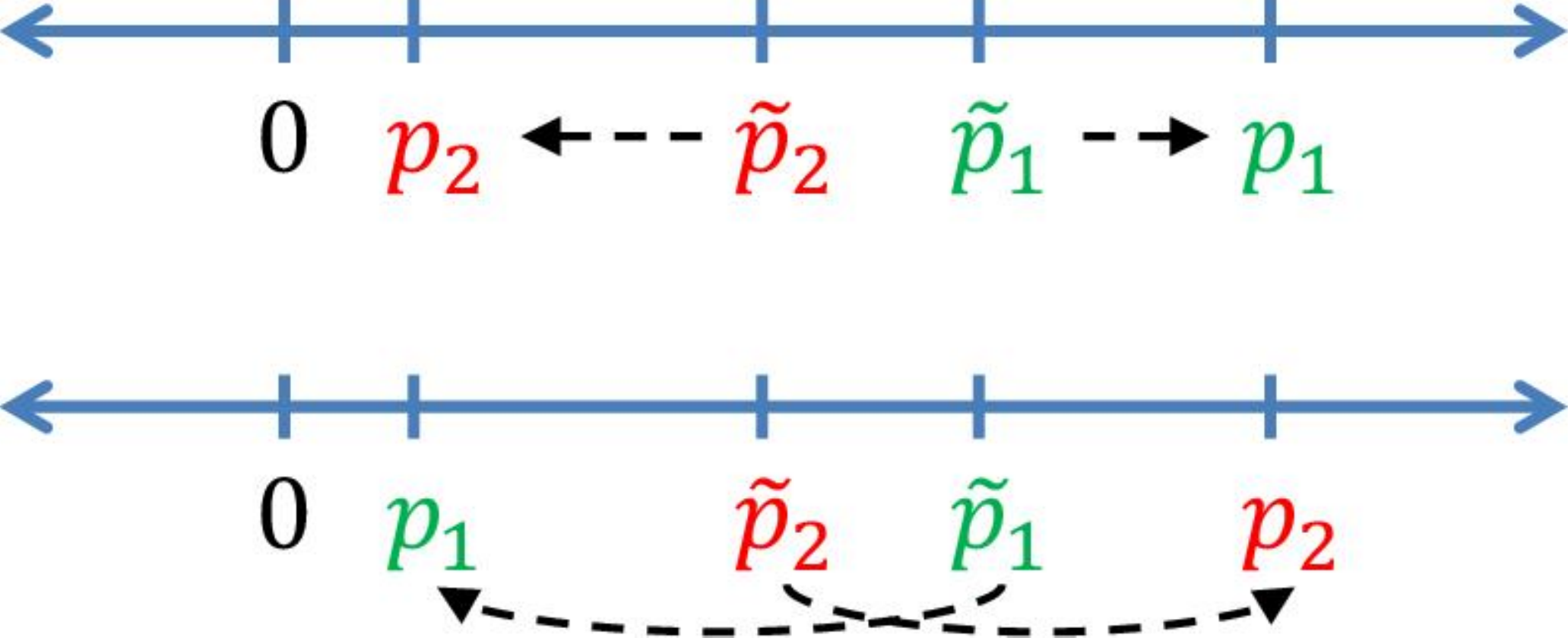}
\par\end{centering}

\label{fig:spread-or-flipped}\caption{An example of spread (above) or flipped (below) relevance pairs (see Definition \ref{def:spread-or-flipped}).}
\end{figure}

\begin{definition}
[Flip-spread refinement]
\label{def:distinguish}
A prediction scheme $\QT$ is a \emph{flip-spread refinement} of $\QTavg$ if $\QT$ is a refinement of $\QTavg$, and for every query $\q$ and pair of advertisers denoted without loss of generality 1 and 2, their relevance pair $\ctr_{(\q,1)},\ctr_{(\q,2)}$ according to $\QT$ is spread or flipped with respect to their relevance pair $\ctravg_{(\q,1)},\ctravg_{(\q,2)}$ according to $\QTavg$. 
\end{definition}
In particular, \emph{any} refinement is flip-spread for a prediction scheme $\QTavg$ that does not distinguish among advertisers: If all advertisers competing for a query $\q$ belong to the same part in $\QTavg$ and so appear equally relevant, then for every two advertisers it holds that  $\ctravg_1 / \ctravg_2 = 1$, and so any relevance pair is spread or flipped with respect to $\ctravg_1,\ctravg_2$.

\begin{example}
\label{ex:distinguish}
[Flip-spread refinement]
Consider the position auction described in the introduction, in which two pizzerias -- the first located in San Francisco (SF) and the second in San Jose (SJ) -- compete for a single advertisement slot next to search results for query `pizza'. Let $\QTavg$ be a coarse prediction scheme. Assume that both pizzerias are equally popular, and so both query-advertiser pairs belong to the same part $\qtavg$ in $\QTavg$ (described as ``pizza ad, user located in the Bay Area''), and their associated relevances are $\ctravg_1=\ctravg_2=\ctravg_{\qtavg}=0.75$. As mentioned above, this makes any refinement of $\QTavg$ flip-spread.

Now assume the search engine has access to a more precise location feature of the query $q$, indicating whether the user is in SF ($\q=\SF$) or in SJ ($\q=\SJ$), and each occurs with equal probability $1/2$. When the prediction scheme is refined by including this feature, the relevances $\ctr_1,\ctr_2$ according to the refined scheme $\QT$ behave antisymmetrically: If the query comes from a user in SF, the relevance $\ctr_{(\SF,1)}$ of the SF advertiser is 1 and the relevance $\ctr_{(\SF,2)}$ of the SJ advertiser is $0.5$. If the query comes from a user in SJ, $\ctr_{(\SJ,1)}=0.5$ and $\ctr_{(\SJ,2)}=1$. Indeed, in both cases the pair $\ctr_1,\ctr_2$ is either spread of flipped with respect to $\ctravg_1,\ctravg_2$.
\end{example}

\begin{example}
\label{ex:non-distinguish}
[Non-flip-spread refinement]
Consider again a single-slot position auction for `pizza'. Assume now that advertiser 1 is a nationwide chain of pizzerias while advertiser 2 is a local artisan pizzeria in SF. Consider a coarse prediction scheme $\QTavg$ as above, and a refinement $\QT$ where this time the refining feature indicates whether $\q=\SF$ (happens with probability $(1/4)-\delta$ for some small $\delta$) or $\q=\neg\SF$ (happens with probability $(3/4)+\delta$). The relevance of advertiser 1 does not depend on user location and his relevance predictions are $\ctravg_1=\ctr_{(\SF,1)}=\ctr_{(\neg\SF,1)}=0.8$. On the other hand, the relevance predictions for advertiser 2 are $\ctravg_2=0.1$, which is the average of $\ctr_{\SF,2}=0.4$ and $\ctr_{\neg\SF,2}=\epsilon$ (this is by setting $\delta=15\epsilon/(8-20\epsilon)$). Refinement $\QT$ is not flip-spread, since for $\q=\SF$, the advertisers relevance predictions $0.8,0.4$ are not spread or flipped with respect to the coarser predictions $0.8,0.1$.
\end{example}

\subsection{Virtual Value Based Mechanisms}
\label{sub:mechanisms}

\subsubsection{The Bayesian Setting}

We assume a Bayesian setting in which the advertisers' values are i.i.d., and are drawn from a publicly known prior distribution $\F$
with a positive smooth density $\f$. The advertisers are not symmetric since they may have different click-through rates and thus non-i.i.d.~realized values. 

Given $\F$ with density $\f$ from which value $\vpc_i$ is drawn, the \emph{inverse hazard rate} (or \emph{information rent}) of $\vpc_i$ is
$$
\ihr^{\F}(\vpc_i)=\frac{1-\F(\vpc_i)}{\f(\vpc_i)}.
$$
The \emph{virtual value} corresponding to $\vpc_i$ is $\vv^{\F}(\vpc_i)=\vpc_i-\ihr^{\F}(\vpc_i)$. A similar definition applies to the realized value: Given a distribution $G$ with density $g$ from which the realized value $\reval_i$ is drawn, the \emph{realized virtual value} of $\reval_i$ is $\vv^{G}(\reval_i)=\reval_i-\ihr^{G}(\reval_i)$.
The relation $\reval_i=\ctr_i\vpc_i$ between the value and realized value implies that
\begin{eqnarray*}
G(\reval_i)=\F(\vpc_i), & g(\reval_i)=\frac{1}{\ctr_i}\f(\vpc_i), & \vv^{G}(\reval_i)=\ctr_i\vv^{F}(\vpc_i).
\end{eqnarray*}
Note that MHR (regular) values imply MHR (regular) realized values. We omit $\F$ from the notation $\vv^{\F},\ihr^{\F}$ where
the distribution is clear from the context,%
\footnote{In particular, by $\vv(\vpc_i)$ we refer to the virtual value and by $\vv(\reval_i)$ we refer to the realized virtual value.%
} and where $\vpc_i$ is clear from the context we may use the notation $\vv_i=\vv(\vpc_i)$. 

\subsubsection{Standard Assumptions on the Prior Distribution}

A distribution $\F$ is \emph{MHR} (\emph{montone hazard rate})
if its inverse hazard rate function $\ihr(\cdot)$ is
non-increasing, and is \emph{regular} if its virtual value function
$\vv(\cdot)$ is non-decreasing. We say that values are MHR (regular) if they're drawn from an MHR (regular) distribution, and that a position auction is MHR (regular) if its advertisers' values are MHR (regular).

The assumption of MHR values is standard in the mechanism design literature (see, e.g., \cite{MM-87}). Many commonly studied distributions are MHR, including the uniform, exponential and normal distributions, and all distributions with log-concave densities \cite{Ewe09}. Every MHR distribution is regular but not vice versa; an example of a regular but non-MHR distribution is the \emph{equal revenue distribution}, defined as $\F(\vpc)=1-\frac{1}{\vpc}$. 

\subsubsection{Welfare and Revenue-Optimal Mechanisms}

Mechanisms are evaluated according to their performance in terms of
expected welfare and expected revenue, where the expectation is over
the value distribution and the query distribution. We discuss briefly the form of the efficiency-optimal and revenue-optimal auctions for sponsored search as identified by prior literature (cf.~\cite{agm, ES-10}).

The \emph{VCG} auction \cite{vic-61,cla-71,gro-73} maximizes expected
social welfare among all truthful and individually rational (IR) mechanisms by finding the most efficient allocation. Every bidder is charged his externality
-- the difference between the maximum welfare if he does not participate
in the auction and the welfare of all other bidders when he does.
In the context of position auctions, the most efficient allocation
assigns the $m$ advertisers with highest realized values $\reval_i$ to
the $m$ slots, after ordering them from high to low (we assume throughout
that ties are broken lexicographically). See~\cite{agm} for the exact form of the VCG prices in the sponsored search setting.

The \emph{revenue-optimal} mechanism~\cite{M81} maximizes expected revenue
among all truthful and IR mechanisms. To find the optimal allocation, Myerson proved the following key lemma for single-item auctions, which relates the expected revenue of an allocation rule to the (realized) \emph{virtual surplus} served by it. Recall that an allocation rule is \emph{monotone} if for every bidder $i$ and every fixed set of bids by bidders other than $i$, bidder $i$'s expected allocation (weakly) increases with its bid. We now restate Myerson's lemma in the context of position auctions.

\begin{lemma}[Myerson \cite{M81} for sponsored search auctions]
\label{lem:Myerson}
Every truthful, IR position auction has a monotone allocation rule. Moreover, its expected revenue is equal to its expected realized virtual surplus,
i.e., $\mathbb{E}_{\mathbf{\vpc}}[\sum_{i,j}\sctr_j\vv_i(\reval_i)x_{i,j}(\mathbf{\vpc})]$
where $x_{i,j}(\mathbf{\vpc})$ indicates if bidder $i$ wins slot
$j$ given value profile $\mathbf{\vpc}$.%
\footnote{Myerson also showed that there is a unique pricing rule that yields a truthful IR mechanism when coupled with a monotone allocation rule. Notice however that the form of this pricing rule is rendered unimportant because this lemma gives us a handle on revenue even without knowing the precise form of the prices.}
\end{lemma}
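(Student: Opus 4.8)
The plan is to replay Myerson's classical argument in the separable position-auction setting. Fix an arbitrary truthful, IR position auction, with allocation indicators $x_{i,j}(\vpcprof)$ and interim payment rule $P_i(\vpc_i)=\mathbb{E}_{\vpcprof_{-i}}[\text{payment of }i]$. Because advertiser $i$'s type $\vpc_i$ is a single number (its value-per-click) and the click-through rates are separable, the only aspect of the outcome that matters to $i$ is its expected number of clicks, so I would define $a_i(\vpc_i)=\mathbb{E}_{\vpcprof_{-i}}[\sum_j\ctr_i\sctr_j x_{i,j}(\vpcprof)]$ and observe that $i$'s interim utility from reporting $\vpc_i'$ when its true value is $\vpc_i$ is $\vpc_i a_i(\vpc_i')-P_i(\vpc_i')$, maximized at $\vpc_i'=\vpc_i$ by truthfulness. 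Adding the incentive inequality at $(\vpc_i,\vpc_i')$ to the one at $(\vpc_i',\vpc_i)$ gives $(\vpc_i-\vpc_i')(a_i(\vpc_i)-a_i(\vpc_i'))\ge 0$, so $a_i$ is non-decreasing; as $a_i$ is, up to the fixed positive factor $\ctr_i$, bidder $i$'s expected allocation, this establishes the first assertion.

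For the revenue identity I would invoke the standard single-dimensional envelope argument: the interim utility $u_i(\vpc_i)=\vpc_i a_i(\vpc_i)-P_i(\vpc_i)$ is a supremum of affine functions of $\vpc_i$, hence convex and absolutely continuous with $u_i'(\vpc_i)=a_i(\vpc_i)$ almost everywhere, so $u_i(\vpc_i)=u_i(0)+\int_0^{\vpc_i}a_i(z)\,dz$ and therefore $P_i(\vpc_i)=\vpc_i a_i(\vpc_i)-\int_0^{\vpc_i}a_i(z)\,dz-u_i(0)$. Individual rationality gives $u_i(0)\ge 0$, and under the usual normalization that the lowest type is not subsidized we have $u_i(0)=0$ (this is the only place IR is used, and it holds automatically for, e.g., the revenue-optimal mechanism). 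The expected revenue extracted from bidder $i$ is then $\mathbb{E}_{\vpc_i}[P_i(\vpc_i)]$; substituting the payment formula, swapping the order of integration in $\mathbb{E}_{\vpc_i}[\int_0^{\vpc_i}a_i(z)\,dz]=\int_0^\infty a_i(z)(1-\F(z))\,dz$, and using $\vv^\F(\vpc_i)=\vpc_i-(1-\F(\vpc_i))/\f(\vpc_i)$, yields $\mathbb{E}_{\vpc_i}[P_i(\vpc_i)]=\mathbb{E}_{\vpc_i}[\vv^\F(\vpc_i)\,a_i(\vpc_i)]$.

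It remains to pass from per-click virtual values to realized virtual surplus. Expanding $a_i$ and pulling the (deterministic, given the prediction scheme) relevance $\ctr_i$ inside gives $\vv^\F(\vpc_i)a_i(\vpc_i)=\mathbb{E}_{\vpcprof_{-i}}[\sum_j\sctr_j\,(\ctr_i\vv^\F(\vpc_i))\,x_{i,j}(\vpcprof)]$, and the identity $\vv^G(\reval_i)=\ctr_i\vv^\F(\vpc_i)$ recorded in Section~\ref{sub:mechanisms} turns $\ctr_i\vv^\F(\vpc_i)$ into the realized virtual value $\vv_i(\reval_i)$. Summing over $i$ and taking the outer expectation over $\vpcprof$, the total expected revenue equals $\mathbb{E}_{\vpcprof}[\sum_{i,j}\sctr_j\vv_i(\reval_i)x_{i,j}(\vpcprof)]$, as claimed; the same computation, read in reverse, recovers the familiar single-item version when $\ctr_i=\sctr_j=1$.

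The computational heart here is routine: the convexity/envelope characterization of truthful single-dimensional mechanisms and the Fubini step are textbook. The only points needing care are the reduction of bidder $i$'s multi-slot incentive problem to the scalar summary $a_i$ (which is exactly where separability of click-through rates is used), and the normalization $u_i(0)=0$ through which IR enters. The conversion to realized virtual surplus is immediate from the identity already derived in the model section, so I do not anticipate a genuine obstacle beyond this bookkeeping.
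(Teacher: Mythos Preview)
The paper does not give its own proof of this lemma; it simply restates Myerson's result in the sponsored-search setting and cites \cite{M81}. Your proposal supplies exactly the standard Myerson argument, correctly specialized via the scalar click-summary $a_i(\vpc_i)=\mathbb{E}_{\vpcprof_{-i}}[\sum_j\ctr_i\sctr_j x_{i,j}(\vpcprof)]$ and the identity $\vv^G(\reval_i)=\ctr_i\vv^\F(\vpc_i)$, so there is nothing to compare against.

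One small caveat worth flagging: the lemma as stated asserts equality of expected revenue and expected virtual surplus for \emph{every} truthful IR mechanism, but your derivation (and Myerson's) yields $P_i(\vpc_i)=\vpc_i a_i(\vpc_i)-\int_0^{\vpc_i}a_i(z)\,dz-u_i(0)$ with $u_i(0)\ge 0$, so in general expected revenue equals expected virtual surplus \emph{minus} $\sum_i u_i(0)$. You handle this by invoking the normalization $u_i(0)=0$, which is the right move and is implicit in the paper's footnote about the unique pricing rule; just be aware that the literal lemma statement is slightly loose on this point, as is common in the literature.
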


When values are regular, the Myerson mechanism maximizes expected revenue by assigning up to $\m$ slots to the $\le\m$ advertisers with highest \emph{non-negative} realized virtual values, in high to low order.%
\footnote{Lahaie and Pennock~\cite{LP07} study methods for optimizing revenue of sponsored search auctions given a practically motivated constraint, that we rank by functions of the form $p_i^{\alpha} v_i$. They note that Myerson's virtual value based approach does not fit this scheme.} By regularity, if an advertiser's value increases then his position can only increase, and so the allocation rule is monotone, hence truthful. 

\subsubsection{Definition of Virtual Value Based Mechanisms}

We now define a class of truthful mechanisms of which Myerson and VCG are extremal members. In Section~\ref{sec:pareto-vvbased} we will show that these mechanisms optimize convex combinations of efficiency and revenue. 

Let $v$ be an advertiser's value drawn from the distribution $F$. Recall that $\lambda(v)$ is the inverse hazard rate of $v$.

\begin{definition}
[$\wt$-virtual value]
\label{def:alpha-vv}
For $\wt\ge 0$, the $\wt$-virtual value of $\vpc$ is 
$$
\wtvv(\vpc) = \vpc - \wt\ihr(\vpc).
$$ 
\end{definition}
Observe that the $\wt$-virtual value can be rewritten as a combination of the value and virtual value: $\wtvv(\vpc) = (1-\wt)\vpc + \wt\vpc-\wt\ihr(\vpc) = (1-\wt)\vpc + \wt\vv(\vpc)$. For $\wt=0$ we get the value and for $\wt=1$ we get the virtual value.

\begin{definition}
[$\wt$-virtual value based mechanism]
For $\wt\ge 0$, the \emph{$\wt$-virtual value based mechanism} asks the advertisers to report their values $\vpc_i$, ranks them according to the \emph{realized} $\wt$-virtual values $\ctr_i\wtvv_i$, and allocates the slots to the advertisers with highest non-negative such values.
\end{definition}
The resulting mechanism is deterministic. For $\wt=0$ we get the VCG auction, and for $\wt=1$ and regular values we get the Myerson mechanism. 

\begin{lemma}[Truthfulness of $\wt$-virtual value based mechanism]
\label{lem:truthfulness}
For $0\le\wt\le 1$ and regular values, the $\wt$-virtual value based mechanism is truthful.
\end{lemma}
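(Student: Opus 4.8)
The plan is to reduce truthfulness to monotonicity of the allocation rule and then verify monotonicity directly. By Myerson's characterization (Lemma~\ref{lem:Myerson} together with the accompanying footnote on the unique truthful pricing rule), a position auction coupled with the induced Myerson payments is truthful and IR precisely when its allocation rule is monotone, so it suffices to show that the allocation rule of the $\wt$-virtual value based mechanism is monotone whenever $0\le\wt\le 1$ and the values are regular.

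First I would establish that $\vpc\mapsto\wtvv(\vpc)$ is non-decreasing. Using the identity $\wtvv(\vpc)=(1-\wt)\vpc+\wt\vv(\vpc)$ recorded right after Definition~\ref{def:alpha-vv}, $\wtvv$ is a convex combination of the identity map, which is increasing, and the virtual value function $\vv$, which is non-decreasing by regularity; since $0\le\wt\le 1$, both coefficients $1-\wt$ and $\wt$ are non-negative, so the combination is non-decreasing. Because each relevance $\ctr_i$ is a fixed strictly positive number, the \emph{realized} $\wt$-virtual value $\ctr_i\wtvv(\vpc_i)$ is therefore also non-decreasing in bidder $i$'s report $\vpc_i$. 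This is exactly the step that would fail for $\wt>1$, where $1-\wt<0$ and $\wtvv$ need no longer be monotone, which is consistent with the hypothesis $\wt\le 1$.

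Next I would fix bidder $i$ and all other reports and let $\vpc_i$ increase. The realized $\wt$-virtual values $\ctr_k\wtvv(\vpc_k)$ of the other bidders do not change, while $\ctr_i\wtvv(\vpc_i)$ weakly increases; hence in the sorted list used by the mechanism (with lexicographic tie-breaking) bidder $i$ can only move weakly upward, and once $\ctr_i\wtvv(\vpc_i)$ is non-negative it stays non-negative. Thus the slot index awarded to $i$ can only weakly decrease, so the corresponding slot effect (zero if $i$ is unallocated) weakly increases because $\sctr_1\ge\cdots\ge\sctr_\m\ge 0$; since $\ctr_i$ is fixed, bidder $i$'s expected allocation — deterministic here — is weakly increasing in its report. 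This is the required monotonicity, and Myerson's lemma then yields truthfulness (and IR).

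I do not expect a genuine obstacle here: the content is the convexity observation together with the fact that moving one bidder upward in an otherwise frozen totally ordered list is monotone. The only place to be careful is the interaction of the non-negativity cutoff with ties — one should note that inserting bidder $i$'s (weakly larger) key into a list whose remaining keys and their relative order are unchanged cannot demote $i$ and cannot push a previously-served $i$ below the non-negativity threshold, which disposes of both subtleties at once.
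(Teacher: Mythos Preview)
Your argument is correct and follows the same route as the paper: write $\wtvv=(1-\wt)\vpc+\wt\vv(\vpc)$, use regularity and $0\le\wt\le 1$ to conclude $\wtvv$ is non-decreasing, deduce monotonicity of the allocation rule, and invoke Lemma~\ref{lem:Myerson}. The paper's proof is a one-line version of exactly this; your additional remarks on the sorted list, tie-breaking, and the non-negativity cutoff are sound but not needed for the result.
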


\begin{proof}
Since $\wtvv = (1-\wt)\vpc + \wt\vv(\vpc)$ and the value distribution is regular, $\wtvv$ is non-decreasing in $\vpc$ when $0\le\wt\le 1$, and so the allocation rule of the $\wt$-virtual value based mechanism is monotone; truthfulness follows from Lemma~\ref{lem:Myerson}.
\end{proof}

\subsection{Pareto Optimal Mechanisms}
\label{sec:pareto-vvbased}

In this section we consider mechanisms that are optimal with respect to a combination of expected welfare and revenue. Such mechanisms are termed Pareto optimal, as they lie on the Pareto frontier of welfare and revenue. Myerson and Satterthwaite \cite{MS} originally established the form of Pareto optimal mechanisms in a slightly different context; they used such mechanisms to guarantee individual rationality and incentive compatibility for bilateral trade. We study these mechanisms because we expect the search engine to use a mechanism from this family. Naively, you would think that the search engine ought only to care about its own revenue, but revenue as modeled in this paper is just `short-term' revenue. In the long run, search engines are invested in the health of their ad markets, and it is reasonable to assume that they care about more than just short-term revenue, i.e., for instance, a combination of efficiency and revenue.%
\footnote{As Likhodedov and Sandholm \cite{Likhodedov03} discuss in the context of multi-unit auctions, such mechanisms can be used to to maximize expected welfare subject to a minimum constraint on the expected revenue. More recently, Diakonikolas et al.~\cite{DPPS-12} study the computational complexity of implementing such mechanisms for single-item auctions; they show that the problem is NP-hard, and provide an FPTAS for the two bidder case.}

We now discuss the form of Pareto optimal mechanisms for sponsored search auctions and show that they are virtual value based. That is, for regular value distributions and for every fixed
trade-off $(1-\wt)\mathbb{E}[\welfare]+\wt\mathbb{E}[\revenue]$, the $\wt$-virtual value based mechanism is optimal among all truthful mechanisms.%
\footnote{This holds for irregular value distributions as well, where $\wt$-virtual value is replaced by ironed $\wt$-virtual value.
} Thus,
every point on the Pareto frontier among welfare and revenue can be
realized by a virtual value based auction. 

Recall the standard \emph{rearrangement inequality}: 
\begin{lemma}\label{lem:rearrangement}
For every two decreasing vectors $\mathbf{\reval}$ and $\mathbf{\s}$ such that $\reval_1\ge\dots\ge\reval_n$ and $\s_1\ge\dots\ge\s_n$, and for every ranking (permutation) $\perm$ of $\{1,\dots,n\}$,
\begin{equation}
\sum_i{s_i\reval_{\perm_i}} \le \sum_i{s_i\reval_i}.
\end{equation}
\end{lemma}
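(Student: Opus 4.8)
The plan is the standard exchange (swap) argument: show that any permutation $\perm$ which is not the identity can be modified, by a single transposition, into one whose objective value is at least as large, and that iterating this reaches the identity in finitely many steps. First I would dispose of the trivial case $\perm = \mathrm{id}$. Otherwise, since $\perm$ is not the identity there exists an \emph{inversion}: a pair of positions $i < j$ with $\perm_i > \perm_j$. Let $\perm'$ be the permutation obtained from $\perm$ by swapping the values in positions $i$ and $j$, so $\perm'_i = \perm_j$, $\perm'_j = \perm_i$, and $\perm'_k = \perm_k$ otherwise.

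The second step is the one-line computation comparing the two sums. All terms except those indexed by $i$ and $j$ cancel, so
\begin{equation*}
\sum_k \s_k \reval_{\perm'_k} - \sum_k \s_k \reval_{\perm_k}
= \bigl(\s_i \reval_{\perm_j} + \s_j \reval_{\perm_i}\bigr) - \bigl(\s_i \reval_{\perm_i} + \s_j \reval_{\perm_j}\bigr)
= (\s_i - \s_j)(\reval_{\perm_j} - \reval_{\perm_i}).
\end{equation*}
Because $i < j$ and $\mathbf{\s}$ is decreasing, $\s_i - \s_j \ge 0$; because $\perm_i > \perm_j$ and $\mathbf{\reval}$ is decreasing, $\reval_{\perm_j} - \reval_{\perm_i} \ge 0$. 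Hence the product is nonnegative, i.e.\ the swap does not decrease the objective.

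The third step is termination: I would track the number of inversions $N(\perm) = |\{(i,j) : i<j,\ \perm_i > \perm_j\}|$. Choosing the swapped pair $(i,j)$ to be an \emph{adjacent} inversion (consecutive positions with $\perm_i > \perm_{i+1}$, which exists whenever $\perm \ne \mathrm{id}$) strictly decreases $N$ by exactly one, so after at most $\binom{n}{2}$ swaps we reach the unique permutation with no inversions, namely the identity, along a chain of weak improvements. Therefore $\sum_i \s_i \reval_{\perm_i} \le \sum_i \s_i \reval_i$, as claimed. The argument is entirely routine; the only point requiring a shred of care is the termination bookkeeping — hence the restriction to adjacent transpositions, which guarantees monotone progress in the inversion count. (One could alternatively induct on $n$ by isolating the position receiving index $1$, or invoke Abel summation, but the exchange argument is the most transparent.)
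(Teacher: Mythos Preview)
Your proof is correct and is the standard exchange argument. The paper itself does not prove this lemma at all --- it is introduced with ``Recall the standard rearrangement inequality'' and stated without proof --- so there is nothing to compare against directly. That said, your bubble-sort-via-adjacent-transpositions approach is exactly the technique the paper later uses to prove its \emph{generalized} rearrangement inequality (Lemma~\ref{lem:rearrange}), where the two-element swap step is justified by invoking the present lemma; so your argument is very much in the spirit of the paper.
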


\begin{lemma}
[Pareto optimal mechanism]
\label{lem:Pareto-mechanism}
Consider a regular position auction and a convex combination objective $(1-\wt)\mathbb{E}[\mbox{welfare}]+\wt\mathbb{E}[\mbox{revenue}]$,
where $0\le\wt\le1$. Then the optimal mechanism for this objective
among all truthful and IR mechanisms is the $\wt$-virtual value based mechanism.
\end{lemma}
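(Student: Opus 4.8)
The plan is to rewrite the convex-combination objective of \emph{any} truthful IR mechanism as a single expected surplus expression that depends on the mechanism only through its allocation rule, and then to maximize that expression profile-by-profile using the rearrangement inequality. Concretely, I would first fix an arbitrary truthful IR position auction with allocation indicators $x_{i,j}(\vpcprof)$. Its expected welfare is $\mathbb{E}_{\vpcprof}[\sum_{i,j}\sctr_j\reval_i\,x_{i,j}(\vpcprof)]$, since the per-impression value of advertiser $i$ in slot $j$ equals $\sctr_j\reval_i$; and by Lemma~\ref{lem:Myerson} its expected revenue is $\mathbb{E}_{\vpcprof}[\sum_{i,j}\sctr_j\vv_i(\reval_i)\,x_{i,j}(\vpcprof)]$. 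Taking the convex combination and pulling the sum together, the objective becomes $\mathbb{E}_{\vpcprof}[\sum_{i,j}\sctr_j\bigl((1-\wt)\reval_i+\wt\vv_i(\reval_i)\bigr)x_{i,j}(\vpcprof)]$. Using the identities $\reval_i=\ctr_i\vpc_i$ and $\vv^{G}(\reval_i)=\ctr_i\vv^{F}(\vpc_i)$ recorded in Section~\ref{sub:mechanisms}, the bracketed term collapses to $\ctr_i\bigl((1-\wt)\vpc_i+\wt\vv(\vpc_i)\bigr)=\ctr_i\wtvv(\vpc_i)$, the realized $\wt$-virtual value. Hence the objective of every truthful IR mechanism equals $\mathbb{E}_{\vpcprof}[\sum_{i,j}\sctr_j\,\ctr_i\wtvv(\vpc_i)\,x_{i,j}(\vpcprof)]$, and depends on the mechanism only through its allocation rule.

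It then suffices to maximize $\sum_{i,j}\sctr_j\,\ctr_i\wtvv(\vpc_i)\,x_{i,j}$ pointwise, for each value profile $\vpcprof$, over all feasible allocations. Since every slot effect $\sctr_j$ is non-negative, removing from $x$ any pair $(i,j)$ whose realized $\wt$-virtual value $\ctr_i\wtvv(\vpc_i)$ is negative can only increase the sum, so it is without loss to serve only advertisers with non-negative realized $\wt$-virtual value; and for the same reason it is without loss to fill as many of the highest-effect slots as possible with them. Padding the slot-effect vector with zeros to length $\n$ so that $\sctr_1\ge\dots\ge\sctr_\m\ge\sctr_{\m+1}=\dots=\sctr_\n=0$, any remaining allocation is a matching of advertisers to slots, and the rearrangement inequality (Lemma~\ref{lem:rearrangement}) shows the sum is maximized by matching the sorted non-negative realized $\wt$-virtual values to the sorted slot effects --- i.e., by giving the (at most $\m$) slots to the advertisers with highest non-negative realized $\wt$-virtual values, in high-to-low order. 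This is precisely the allocation rule of the $\wt$-virtual value based mechanism.

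Finally, I would note that the $\wt$-virtual value based mechanism is itself truthful and IR, hence admissible in the maximization: truthfulness for $0\le\wt\le1$ and regular values is Lemma~\ref{lem:truthfulness}, and coupling its monotone allocation rule with the (unique) Myerson threshold prices makes it IR. Since this mechanism attains, at every profile, the pointwise maximum of the common objective expression, no truthful IR mechanism can exceed it, which proves the claim. The main obstacle is the pointwise-optimization step: one must correctly combine the ``discard negative realized virtual values'' reduction with the rearrangement inequality despite the possible mismatch between the number of slots $\m$ and the number of advertisers $\n$, and observe that the resulting optimal allocation happens to be monotone, so that pointwise optimality and incentive-compatible implementability do not conflict.
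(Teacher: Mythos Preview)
Your proposal is correct and follows essentially the same approach as the paper: apply Lemma~\ref{lem:Myerson} to rewrite the convex-combination objective as an expected realized $\wt$-virtual surplus, then invoke the rearrangement inequality (Lemma~\ref{lem:rearrangement}) to conclude that the pointwise-optimal allocation is exactly that of the $\wt$-virtual value based mechanism, which is truthful by Lemma~\ref{lem:truthfulness}. Your write-up is somewhat more explicit than the paper's---spelling out the collapse $(1-\wt)\reval_i+\wt\vv(\reval_i)=\ctr_i\wtvv(\vpc_i)$, the padding of slot effects with zeros, the discard-negatives step, and the IR check---but there is no substantive difference in method.
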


\begin{proof}
Applying Myerson's lemma (Lemma \ref{lem:Myerson}) we get that the
mechanism's objective is to maximize 
\begin{eqnarray*}
(1-\wt)\mathbb{E}_{\mathbf{v}}[\sum_{i,j}v_{i,j}x_{i,j}(\mathbf{v})] + \wt\mathbb{E}_{\mathbf{v}}[\sum_{i,j}\sctr_j\varphi(\reval_{i})x_{i,j}(\mathbf{v})] & =\\
\mathbb{E}_{\mathbf{v}}[\sum_{i,j}x_{i,j}(\mathbf{v})\cdot\left((1-\wt) \sctr_j\reval_i+\wt\sctr_j\varphi(\reval_i)\right)]
\end{eqnarray*}
Therefore, by the rearrangement inequality, the optimal allocation rule takes up to $m$ bidders with highest non-negative combinations $(1-\wt)\reval_i+\wt\varphi(\reval_i)$,
and assigns them one by one to the highest slots. This is precisely
the $\alpha$-virtual value based mechanism, which is guaranteed to be truthful
by Lemma~\ref{lem:truthfulness}.
\end{proof}

\section{How Does Refinement Affect Market Efficiency?}
\label{sec:efficiency}

In this section we consider the effect of refined relevance prediction on the welfare guarantees of Pareto optimal mechanisms; see Section~\ref{sec:pareto-vvbased} for a discussion of why we think search engines would use a mechanism from this class. Thus, our goal is to study how the market is affected by refined relevance predictions.  In our main technical result, we identify natural sufficient conditions under which refining the prediction improves the welfare of any Pareto optimal mechanism.

\begin{theorem}[Refined prediction improves welfare]
\label{thm:main}
Consider an i.i.d., MHR position auction and let $\M$ be a Pareto optimal mechanism. Then for every value profile $\vpcprof$, coarse prediction scheme $\QTavg$ and flip-spread refinement $\QT$, running $\M$ with $\QT$ increases welfare in comparison to running $\M$ with $\QTavg$. 
\end{theorem}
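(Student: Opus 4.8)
The plan is to fix the value profile, reduce first to a pointwise statement at the level of a single coarse part, and then to a ranking inequality that the flip--spread condition and the MHR assumption jointly control. Write $\w_i=\wtvv(\vpc_i)$ for advertiser $i$'s $\wt$-virtual value (by Lemma~\ref{lem:Pareto-mechanism}, $\M$ is $\wt$-virtual value based for some $0\le\wt\le1$); run with a relevance profile $\ctr$, the mechanism ranks the advertisers with $\w_i\ge0$ by the products $\ctr_i\w_i$ and fills the slots with them top--down, so its welfare is $\sum_j\sctr_j\,\ctr_{\sigma(j)}\vpc_{\sigma(j)}$ for that ranking $\sigma$. Since the eligibility set $\{i:\w_i\ge0\}$ and the coarse ranking $\sigma^{\QTavg}$ (by $\ctravg_i\w_i$) do not depend on the subpart into which the query falls, the averaging property defining a refinement gives that the welfare of running $\M$ with $\QTavg$ equals $\mathbb{E}_{\qt}\bigl[\sum_j\sctr_j\,\ctr_{\qt,\sigma^{\QTavg}(j)}\vpc_{\sigma^{\QTavg}(j)}\bigr]$. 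Hence it suffices to prove, for every subpart $\qt$, that running $\M$ with the refined relevances $\ctr=\ctr_{\qt}$ yields at least as much welfare (evaluated at $\ctr_{\qt}$) as running the stale ranking $\sigma^{\QTavg}$ with $\ctr_{\qt}$.

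To prove this per-subpart inequality I would first record two consequences of the i.i.d.\ MHR assumption, after relabelling so that $\vpc_1\ge\dots\ge\vpc_n$: the map $\vpc\mapsto\wtvv(\vpc)$ is strictly increasing (its derivative is $1-\wt\ihr'\ge1$), so the eligible advertisers form a prefix and both rankings are over the same set; and $\wtvv(\vpc)/\vpc$ is non-decreasing (because $\ihr(\vpc)/\vpc$ is non-increasing), so that for $i<i'$ one has $\w_i/\w_{i'}\ge\vpc_i/\vpc_{i'}\ge1$. Then, using that the slot effects are non-increasing, I would Abel-sum over slots: with $\sctr_{m+1}:=0$, each of the two welfares equals $\sum_l(\sctr_l-\sctr_{l+1})R_l$, where $R_l$ is the total realized value $\sum_i\ctr_i\vpc_i$ over the top $l$ advertisers by the respective ranking (evaluated at $\ctr$). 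Since $\sctr_l-\sctr_{l+1}\ge0$, the inequality follows once we show, for every $l$, that the top $l$ advertisers by $\ctr_i\w_i$ have total realized value at least that of the top $l$ by $\ctravg_i\w_i$.

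This top-$l$ comparison I would prove by pairwise domination. The two $l$-sets have equal size, so it is enough that $\ctr_{i'}\vpc_{i'}\ge\ctr_i\vpc_i$ whenever $i'$ lies in the first set but not the second and $i$ lies in the second but not the first. For such $i,i'$, membership yields $\ctr_{i'}\w_{i'}\ge\ctr_i\w_i$ and $\ctravg_i\w_i\ge\ctravg_{i'}\w_{i'}$ (ties broken by a fixed rule used for both rankings). If $\vpc_{i'}<\vpc_i$ then $\w_i/\w_{i'}\ge\vpc_i/\vpc_{i'}$, so $\ctr_{i'}/\ctr_i\ge\w_i/\w_{i'}\ge\vpc_i/\vpc_{i'}$ and we are done without invoking flip--spread. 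If $\vpc_{i'}\ge\vpc_i$ then the coarse ratio satisfies $\ctravg_{i'}/\ctravg_i\le\w_i/\w_{i'}\le1$, and I would apply the spread-or-flipped hypothesis to the pair $\{i',i\}$: the ``spread'' alternative would force $\ctr_{i'}/\ctr_i\le\ctravg_{i'}/\ctravg_i\le\w_i/\w_{i'}$, hence equality throughout and a tie in \emph{both} rankings, which the fixed tie-break resolves identically and so contradicts $i'$ and $i$ lying on opposite sides of the top-$l$ cutoff in the two rankings; therefore the ``flipped'' alternative holds, i.e.\ $\ctr_{i'}/\ctr_i\ge1\ge\vpc_i/\vpc_{i'}$, and again $\ctr_{i'}\vpc_{i'}\ge\ctr_i\vpc_i$. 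Taking expectations over subparts then yields the theorem.

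The step I expect to be the crux is the $\vpc_{i'}\ge\vpc_i$ case above: here the mechanism wants to rank the higher-value advertiser $i'$ \emph{below} $i$ under the coarse relevances but \emph{above} $i$ under the refined ones, so their relative relevance must have moved, and it is precisely the flip--spread hypothesis that forbids it from moving only ``part way'' and instead forces the relevances to cross, after which the higher value of $i'$ closes the argument. The remaining work — carrying a consistent tie-breaking convention through both rankings, the degenerate cases of zero virtual values, and the bookkeeping when fewer advertisers are eligible than there are slots — is routine.
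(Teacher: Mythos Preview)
Your proof is correct, and the core insight is the same as the paper's: MHR makes $\wtvv(\vpc)/\vpc$ non-decreasing, and together with the flip--spread hypothesis this pins down how the refined and coarse rankings can differ. Your pairwise-domination claim (for $i'$ in the refined top-$l$ only and $i$ in the coarse top-$l$ only, $\ctr_{i'}\vpc_{i'}\ge\ctr_i\vpc_i$) is essentially the contrapositive of the paper's key Lemma~\ref{lem:streamed}, which shows that any pair ranked \emph{inefficiently} by the refined mechanism is also ranked that way by the coarse mechanism. The genuine difference is in how you pass from the pairwise statement to the welfare inequality: the paper packages it as ``the refined ranking is more ordered than the coarse one'' and proves a generalized rearrangement inequality (Lemma~\ref{lem:rearrange}) via bubble sort, whereas you Abel-sum over the slot effects and reduce to a top-$l$ prefix-sum comparison. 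Your route avoids the bubble-sort lemma and is a bit more elementary; the paper's route yields the cleaner structural statement that \emph{every} inversion survives coarsening, not just those straddling some top-$l$ cutoff.

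Two minor remarks. First, the opening averaging step is a detour: the theorem is already pointwise at the level of a single query (the paper fixes the query, hence both $\ctravg$ and $\ctr$), and welfare under either scheme is evaluated at the refined relevances $\ctr$; so you can go straight to the per-subpart inequality. Second, in your ``spread'' subcase you correctly force equality throughout and then a tie in both rankings; the contradiction with $i,i'$ lying on opposite sides of the top-$l$ cutoff does use that the same tie-breaking rule is applied, which the paper assumes (lexicographic) — just make that explicit.
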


Note that this result holds entirely \emph{pointwise}, that is, it does not require averaging over the value profiles, nor does it require averaging over the subparts of $\QT$ to which the query-advertiser pairs belong to (the latter is required by Theorem \ref{thm:trade-off} on improving welfare-revenue trade-offs). 

What is the technical content of this theorem? Recall that in a single-item auction setting when values are drawn i.i.d.~from a regular (or MHR) distribution, the revenue-optimal and the efficient auction both rank bidders in the same order; of course, the revenue optimal mechanisms excludes bidders with negative virtual values by using a reserve price. In the sponsored search context, even when the value per click distributions are IID and mhr, the presence of the relevance terms cause the revenue-optimal ranking to differ from the optimal one. What we will show is that the difference between the revenue and efficiency rankings (or more precisely between the efficient and Pareto optimal rankings) diminishes with refinements so long as the conditions stated in the theorem statement above hold.

\subsection{Proof of Main Result}

In Pareto optimal auctions, bidders are ranked according to their $\wt$-virtual values $\wtvv=\vpc-\wt\ihr$. We can thus think of $\alpha*\ihr$ as a ``penalty'' on the value imposed by the seller. In expectation, the inverse hazard rate is the rent a winning bidder manages to keep to himself out of his total value when the seller applies the optimal mechanism. Thus a seller aiming to maximize expected revenue `penalizes' bidders with large rents and demotes them. The monotonicity of the relative size of the penalty in comparison to the value plays an important role in our analysis; we will show that when this quantity is falling, refinements bring the revenue optimal ordering closer to the efficient one. 

\begin{definition}[Bid penalty fraction]
For every value $\vpc\sim \F$ with inverse hazard rate $\ihr$, its \emph{penalty fraction} is $\ihr/\vpc$.
\end{definition}

For MHR distributions, the penalty fraction is non-increasing: I.e., for every two values $\vpc_1>\vpc_2$ drawn from an MHR distribution $\F$,
\begin{align}
\frac{\ihr_1}{\vpc_1} \le \frac{\ihr_2}{\vpc_2} \label{eq:monotone-fraction},
\end{align}
since the inverse hazard rates satisfy $\ihr_1 \le \ihr_2$. 

However, monotonicity of the penalty fraction does not hold for all regular distributions -- see distribution in Example \ref{ex:welfare-regular}. In a sense, mhr characterizes distributions with non-increasing penalty fractions: If the hazard rate was decreasing (i.e., the inverse hazard rates are increasing), it would be possible to shift the distribution far enough to the right so that the penalty fractions are increasing.

If, as for MHR distributions, the relative effect of the penalty diminishes as the value grows larger, then values grow further apart after transforming them into $\wt$-virtual values. Let $\vpc_1,\vpc_2$ be two i.i.d.~MHR values, and let their $\wt$-virtual values be $\wtvv_1,\wtvv_2$. Then we claim that
\begin{align}
\frac{\vpc_1}{\vpc_2} < \frac{\wtvv_1}{\wtvv_2} \implies \vpc_1 > \vpc_2 \label{eq:must-be-larger}.
\end{align}

We now prove our claim. Assume for contradiction that $\vpc_1 < \vpc_2$ and $\vpc_1\wtvv_2 < \vpc_2\wtvv_1$. Plugging in $\wtvv_i = \vpc_i-\wt\ihr_i$ we get
$$
\vpc_1\vpc_2 - \wt\vpc_1\ihr_2 < \vpc_1\vpc_2 - \wt\vpc_2\ihr_1 \iff \vpc_1\ihr_2 > \vpc_2\ihr_1,
$$
achieving a contradiction by Inequality \ref{eq:monotone-fraction}.

The next lemma is our key lemma and shows that the ranking of any Pareto optimal mechanism becomes more similar to the efficient ranking with refinement. That is, it shows that if the Pareto optimal mechanism's allocation is inefficient despite using a flip-spread refined relevance prediction, then its allocation necessarily remains inefficient when the prediction is not refined. 

\begin{lemma}
[Inefficient allocation with refined prediction]
\label{lem:streamed}
Let prediction scheme $\QT$ be a flip-spread refinement of $\QTavg$, and consider two advertisers 1 and 2 whose values are drawn from an IID regular distribution. Then
$$
\relone\vpc_1 < \reltwo\vpc_2 ~\myand~ \relone\wtvv_1 \ge \reltwo\wtvv_2 > 0 \implies \relavgone\wtvv_1 \ge \relavgtwo\wtvv_2.
$$
\end{lemma}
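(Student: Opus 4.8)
The plan is to reduce everything to inequalities between ratios. Write $x := \relone/\reltwo$ for the refined relevance ratio and $\bar x := \relavgone/\relavgtwo$ for the coarse one; the desired conclusion $\relavgone\wtvv_1 \ge \relavgtwo\wtvv_2$ is exactly $\bar x \ge \wtvv_2/\wtvv_1$. First I would note that the hypothesis $\relone\wtvv_1 \ge \reltwo\wtvv_2 > 0$ together with $\relone,\reltwo>0$ forces $\wtvv_1,\wtvv_2>0$, so all the ratios in play are well defined and positive. Dividing the two hypotheses through by the appropriate positive quantities, $\relone\vpc_1<\reltwo\vpc_2$ becomes $\vpc_1/\vpc_2 < \reltwo/\relone$ and $\relone\wtvv_1\ge\reltwo\wtvv_2$ becomes $\wtvv_1/\wtvv_2 \ge \reltwo/\relone$; chaining these gives $\vpc_1/\vpc_2 < \wtvv_1/\wtvv_2$.

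The crux is then to apply Inequality~(\ref{eq:must-be-larger}) -- the single place where the hazard-rate structure behind the penalty fraction enters -- to deduce from $\vpc_1/\vpc_2 < \wtvv_1/\wtvv_2$ that $\vpc_1 > \vpc_2$, hence $\vpc_1/\vpc_2 > 1$. Feeding this back into $\vpc_1/\vpc_2 < \reltwo/\relone$ yields $\reltwo/\relone > 1$, i.e.\ the refined relevance ratio satisfies $x = \relone/\reltwo < 1$: the refinement has moved advertiser $2$ strictly ahead of advertiser $1$ in relevance. I expect this reduction to $x<1$ to be the main obstacle; it is where the distributional hypothesis is genuinely used, and after it the argument is just a short case analysis on the flip-spread condition.

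To finish, since $\relone/\reltwo < 1$, one of the four cases of Definition~\ref{def:spread-or-flipped} (with $a,b = \relone,\reltwo$ the refined pair and $c,d = \relavgone,\relavgtwo$ the coarse pair) holds; the two cases that require $\relone/\reltwo \ge 1$ are excluded, and each of the remaining two forces $\relavgone/\relavgtwo \ge \relone/\reltwo$, i.e.\ $\bar x \ge x$. (Intuitively: a flip-spread refinement can only push a relevance ratio further from $1$ on the same side or flip it across $1$, so once the refined ratio drops below $1$ the coarse ratio must have been at least as large.) Now rewrite the second hypothesis once more as $\relone/\reltwo \ge \wtvv_2/\wtvv_1$ and chain: $\relavgone/\relavgtwo \ge \relone/\reltwo \ge \wtvv_2/\wtvv_1$. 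Clearing denominators (everything positive) gives precisely $\relavgone\wtvv_1 \ge \relavgtwo\wtvv_2$, as required.
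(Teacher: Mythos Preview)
Your argument is correct and mirrors the paper's proof essentially step for step: both combine the two hypotheses into the chain $\wtvv_1/\wtvv_2 \ge \reltwo/\relone > \vpc_1/\vpc_2$, invoke Inequality~(\ref{eq:must-be-larger}) to obtain $\vpc_1>\vpc_2$ and hence $\reltwo/\relone>1$, then use the flip-spread definition to deduce $\reltwo/\relone \ge \relavgtwo/\relavgone$ and conclude. The only cosmetic difference is that you carry out the flip-spread case analysis on $\relone/\reltwo<1$ whereas the paper works with the reciprocal $\reltwo/\relone>1$.
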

In words, if advertiser 1 has lower realized value than 2 but higher positive realized $\wt$-virtual value according to the refined prediction $\QT$, then this holds according to $\QTavg$ as well.

\begin{proof}
First note that $\vpc_{2}>0$, because advertiser 2's realized value is strictly higher than advertiser 1's. Combining the two inequalities on the lefthand side we get 
$$
\frac{\wtvv_1}{\wtvv_2} \ge \frac{\reltwo}{\relone} > \frac{\vpc_1}{\vpc_2}.
$$

Because the value distribution is mhr, we can applying Equation~\ref{eq:must-be-larger} to show that $\vpc_1 > \vpc_2$, and so:

\begin{align}
\frac{\wtvv_1}{\wtvv_2} \ge \frac{\reltwo}{\relone} > \frac{\vpc_1}{\vpc_2} > 1.\label{eq:first}
\end{align}

By definition of a flip-spread refinement, the pair $\relone,\reltwo$ is spread or flipped with respect to $\relavgone,\relavgtwo$. So 
\begin{align}
\frac{\reltwo}{\relone} > 1 \implies \frac{\reltwo}{\relone} \ge \frac{\relavgtwo}{\relavgone}.\label{eq:second} 
\end{align}
Equations \ref{eq:first} and \ref{eq:second} combined show that $\relavgone\wtvv_1 \ge \relavgtwo\wtvv_2$, completing the proof.
\end{proof}

We now state and prove a technical lemma that generalizes the classic rearrangement inequality (Lemma~\ref{lem:rearrangement}). Consider two orderings $\perm_1,\perm_2$ of the same ground set $(1,\dots,\n)$. We say that $\perm_1$ is \emph{more ordered} than $\perm_2$ if for every pair of elements $i>j$ which appear in order in $\perm_2$, they also appear in order in $\perm_1$, i.e., $\perm_2(i)>\perm_2(j) \implies \perm_1(i)>\perm_1(j)$.

\begin{lemma}
[Generalized rearrangement inequality]
\label{lem:rearrange}
Rename the advertisers such that their realized values are decreasing, i.e., $\reval_1\ge\dots\ge\reval_\n$. Let $\perm_1,\perm_2$ be two rankings of the advertisers such that $\perm_1$ is more ordered than $\perm_2$. Let $\mathbf{\s}$ be a vector of $\n$ decreasing slot effects $\s_1\ge\dots\ge\s_\n$. Then $\mathbf{\s}\cdot\revalvec(\perm_1) > \mathbf{\s}\cdot\revalvec(\perm_2)$.
\end{lemma}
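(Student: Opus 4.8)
The plan is a bubble-sort induction: transform $\perm_2$ into $\perm_1$ by a sequence of adjacent slot-swaps, each of which fixes one ``out of order'' pair and weakly increases $\mathbf{\s}\cdot\revalvec(\cdot)$. For a ranking $\perm$, let $D(\perm)=\{(i,j):i>j,\ \perm(i)<\perm(j)\}$ be its set of disordered pairs (advertisers are named so $\reval_1\ge\dots\ge\reval_\n$, so for $i>j$ the condition $\perm(i)<\perm(j)$ says the weakly lower-value advertiser $i$ sits in the better slot). ``$\perm_1$ is more ordered than $\perm_2$'' is precisely $D(\perm_1)\subseteq D(\perm_2)$, and since a permutation is determined by its inversion set, $D(\perm_1)=D(\perm_2)$ iff $\perm_1=\perm_2$. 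I would induct on $N=|D(\perm_2)|-|D(\perm_1)|\ge 0$; when $N=0$ we get $\perm_1=\perm_2$ and the two sides coincide.

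For $N\ge 1$, pick a pair $(i,j)\in D(\perm_2)\setminus D(\perm_1)$ minimizing the slot gap $\perm_2(j)-\perm_2(i)>0$. The key claim is that this minimal gap equals $1$, i.e.\ $i$ and $j$ occupy consecutive slots $t,t+1$ under $\perm_2$. Suppose not: some advertiser $c$ has $\perm_2(i)<\perm_2(c)<\perm_2(j)$; split on whether $c>i$, and (if $c<i$) whether $c>j$ or $c<j$. In each case one of $(c,j)$ or $(i,c)$ is a disordered pair of $\perm_2$ of strictly smaller gap, hence by minimality lies in $D(\perm_1)$; chaining the resulting constraint on $\perm_1$-positions with $\perm_1(i)>\perm_1(j)$ (which holds since $i>j$ and $(i,j)\notin D(\perm_1)$), and then pushing it back through $D(\perm_1)\subseteq D(\perm_2)$, contradicts $\perm_2(i)<\perm_2(c)<\perm_2(j)$.

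Now let $\perm_2'$ be $\perm_2$ with advertisers $i$ and $j$ (in slots $t$ and $t+1$) swapped. Since they are adjacent, only the order of the single pair $(i,j)$ changes, so $D(\perm_2')=D(\perm_2)\setminus\{(i,j)\}$; because $(i,j)\notin D(\perm_1)$ we still have $D(\perm_1)\subseteq D(\perm_2')$, and $N$ drops by one, so the induction hypothesis gives $\mathbf{\s}\cdot\revalvec(\perm_1)\ge\mathbf{\s}\cdot\revalvec(\perm_2')$. Finally $\mathbf{\s}\cdot\revalvec(\perm_2')-\mathbf{\s}\cdot\revalvec(\perm_2)=(\s_t-\s_{t+1})(\reval_j-\reval_i)\ge 0$ since $\s_t\ge\s_{t+1}$ and $\reval_j\ge\reval_i$ (as $j<i$), and chaining the two inequalities closes the induction.

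The one genuinely delicate point is the adjacency claim in the second paragraph; the rest — ``a permutation is determined by its inversions'', the observation that an adjacent swap flips exactly one pair, and the $(\s_t-\s_{t+1})(\reval_j-\reval_i)\ge 0$ step — is routine. (One could instead invoke the weak Bruhat order, where $D(\perm_1)\subseteq D(\perm_2)$ means $\perm_1\le\perm_2$ and covers are adjacent transpositions adding one inversion, but the self-contained argument is short enough to keep.) A last remark: this argument establishes the weak inequality $\ge$; strictness requires some swap along the way to have both $\s_t>\s_{t+1}$ and $\reval_j>\reval_i$, so the statement should carry the mild genericity (distinct realized values and slot effects, and $\perm_1\neq\perm_2$) that presumably holds wherever the lemma is applied.
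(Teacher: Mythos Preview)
Your proof is correct and follows essentially the same bubble-sort approach as the paper: reduce $\perm_2$ to $\perm_1$ via adjacent swaps, each of which removes one inversion and weakly increases $\mathbf{s}\cdot\mathbf{r}(\cdot)$. You are in fact more careful than the paper on the one delicate point --- the paper simply asserts that bubble-sorting $\perm_2$ toward $\perm_1$ keeps it ``increasingly more ordered,'' whereas you explicitly prove that a pair in $D(\perm_2)\setminus D(\perm_1)$ of minimal slot-gap must be adjacent and that the invariant $D(\perm_1)\subseteq D(\cdot)$ is preserved by the swap --- and your closing remark is on point: the paper's argument, like yours, only establishes the weak inequality $\ge$, not the strict $>$ written in the lemma statement.
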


For the proof we use the following notation:
A ranking $\perm$ is just an ordered vector of the advertisers $1,\dots,\n$. We use $i<j$ for advertisers and $x<y$ for their ranks. So $\perm_x$ is the advertiser that appears in rank $x$ in $\perm$, and $\perm(i)$ is the rank of advertiser $i$ in $\perm$. The notation $\revalvec(\perm)$ means that we are ordering the vector $\revalvec$ in the same order in which the advertisers appear in $\perm$, that is, $\reval_{\perm_1},\dots,\reval_{\perm_\n}$. So
$$
\mathbf{\s}\cdot\revalvec(\perm) = \sum_{x=1}^\n{\s_x \reval_{\perm_x}} 
= \sum_{i=1}^\n{\s_{\perm(i)} \reval_i}.
$$

\begin{proof}
We first prove the statement for two rankings $\permone,\permtwo$, which are identical except for two advertisers $i<j$ that appear consecutively in both but in flipped order. I.e, let $x=\permone(i)$ and $y=\permone(j)$, then $x=y-1$ and
$$
x=\permtwo(j)=\permtwo(i)-1=y-1
$$
(we are using here the assumption that $\permone$ is more ordered than $\permtwo$ and so $i$ must appear before $j$ in $\permone$). 

Observe that for such rankings, since the only difference is advertisers $i<j$ appearing in consecutive ranks $x<y$ in $\permone$ and in flipped ranks $y,x$ in $\permtwo$, to show that $\mathbf{\s}\cdot\revalvec(\permone) \ge \mathbf{\s}\cdot\revalvec(\permtwo)$ it's sufficient to show 
\begin{align}
\s_x\reval_i + \s_y\reval_j \ge
\s_y\reval_i + \s_x\reval_j \label{eq:rearrange}.
\end{align}
Since vectors $\mathbf{\s},\revalvec$ are decreasing, $\s_x\ge \s_y$ and $\reval_i\ge\reval_j$, and so by the standard rearrangement inequality Condition \ref{eq:rearrange} holds.

We now turn to general rankings $\permone,\permtwo$. To show that  $\mathbf{\s}\cdot\revalvec(\permone) \ge \mathbf{\s}\cdot\revalvec(\permtwo)$, we conceptually run a ``bubble sort'' on $\permtwo$ to turn it step by step into $\permone$. In every step, we compare a pair of adjacent advertisers in $\permtwo$ and swap them if their order does not match that of $\permone$. The proof is complete by noticing that since $\permone$ is more ordered to begin with, then $\permtwo$ is becoming increasingly more ordered along the process, because two advertisers are swapped only if they're in the wrong order. Therefore we can apply the above claim to get that $\mathbf{\s}\cdot\revalvec(\permtwo)$ is non-decreasing with each step, thus completing the proof. 
\end{proof}

We now show that in the more general setting of Theorem \ref{thm:main},
inefficiencies due to refinement never occur. 

\begin{proof}
[of Theorem \ref{thm:main}]
By assumption, the $\n$ advertisers have i.i.d., MHR values. We can assume that $\m\ge \n$, i.e., that there are enough slots for all the ads. This is without loss of generality since the click-through rates of the lowest slots can be set to zero, by setting their slot effects $\sctr_j=0$. Observe that in a position auction with enough slots, an assignment of the advertisers to the slots is inefficient if and only if there are two advertisers who are assigned slots and for whom the following holds: the advertiser with the lower realized value among the two gets the higher slot, and vice versa. 

Let the Pareto optimal mechanism $\M$ be the $\wt$-virtual value based mechanism which ranks the advertisers according to $\wtvv$ (Lemma \ref{lem:Pareto-mechanism}). Fix a value profile $\vpcprof$, and for every advertiser, a query-advertiser part and subpart according to $\QTavg$ and its flip-spread refinement $\QT$. This fixes for every advertiser $i$ the relevance predictions $\ctravg_i$ according to $\QTavg$ and $\ctr_i$ according to $\QT$.  

Consider the assignment chosen by $M$ when using the flip-spread refinement $\QT$ for query-advertiser relevance predictions. Assuming the chosen assignment is inefficient, without loss of generality let advertisers 1 and 2 be such that both are assigned, and advertiser 1 has a lower realized value but higher slot than advertiser 2. Formally, $\wtvv_1,\wtvv_2\ge 0$, $\relone\vpc_1 < \reltwo\vpc_2$ and $\relone\wtvv_1 \ge \reltwo\wtvv_2$. 

If $\wtvv_2 > 0$, we can now invoke Lemma \ref{lem:streamed} to get that $\relavgone\wtvv_1 \ge \relavgtwo\wtvv_2$. If $\wtvv_2 = 0$, the same inequality holds trivially. We have shown that if advertiser 1 is ranked above advertiser 2 by mechanism $M$ using the flip-spread refinement $\QT$, then $M$ will also rank 1 above 2 when using the coarse prediction scheme $\QTavg$. 

Let $\perm_1$ be the ranking of advertisers by $M$ using $\QT$, and let $\perm_2$ be the ranking of advertisers by $M$ using $\QTavg$. We have shown that $\perm_1$ is more ordered than $\perm_2$. We now apply Lemma \ref{lem:rearrange}. Note that the realized values are calculated with respect to the refined relevance prediction $\QT$, since these are the values that realize the welfare of $M$. Lemma \ref{lem:rearrange} shows that running $M$ with the flip-spread refinement $\QT$ instead of $\QTavg$ increases the welfare, thus completing the proof. 
\end{proof}

\subsection{Bad Examples and Necessity of Assumptions}

We now demonstrate via examples that the assumptions like the ones made in Theorem~\ref{thm:main} are necessary.  All examples use two advertisers, one ad position, and use the revenue-optimal mechanism.

Recall the position auction described in Example \ref{ex:distinguish}. We assume both advertisers' values are i.i.d.~and drawn from a regular distribution $\F$ with density $\f$. The realized values of the advertisers are as follows:
\begin{enumerate}
\item $\reval_{(\SF,1)}=\vpc_1$, $\reval_{(\SJ,1)}=\vpc_1/2$;
\item $\reval_{(\SJ,2)}=\vpc_2$, $\reval_{(\SF,2)}=\vpc_2/2$;
\end{enumerate}

For the analysis we fix two values $\vpc\ge \vpc'>0$ drawn from $\F$,
but do not specify which value belongs to which advertiser (the
advertisers are i.i.d.~ so each of the two possibilities occurs with probability $\frac{1}{2}$). Let $\vv,\vv'$ be the virtual values corresponding to $\vpc,\vpc'$; by regularity of $\F$, $\vv\ge\vv'$. We assume that $\vv'$ is non-negative and $\vv$ is positive, since otherwise refinement does not change the allocation of the revenue-optimal mechanism, and therefore
has no effect on welfare. We also assume without loss of generality that the user is from SJ (the other case is symmetric).

\subsubsection{\label{sub:Inefficiencies-data} When refinement reduces efficiency}

Assume $\vpc>\vpc'$ (if $\vpc=\vpc'$ then allocation using the user's location is always efficient). We show that a necessary condition for inefficiencies due to refinement to occur is 
\begin{align}
\frac{\vpc'}{\vpc}<\frac{1}{2}<\frac{\vv'}{\vv} \label{eq:cond},
\end{align}
and that if Condition \ref{eq:cond} holds, then efficiency loss of magnitude $\vpc/2-\vpc'$ occurs with probability $\frac{1}{2}$. 

Assume first that Condition \ref{eq:cond} holds. With probability $1/2$, the SF advertiser has the higher among the two values, and its realized value $\vpc/2$ is higher than the other realized value $\vpc'$ by Condition \ref{eq:cond}. If the location data is ignored, the SF advertiser with the higher value wins and the assignment is efficient. But if the location is used for refinement, the SJ advertiser's virtual value $\vv'$ is higher than $\vv/2$ (Condition \ref{eq:cond}), and so there is efficiency loss of $\vpc/2-\vpc'$.

We now show that condition \ref{eq:cond} is necessary, and there are no other cases of inefficient allocation due to refinement. Without refinement, the advertiser with the higher value wins, so an inefficiency due to refinement occurs only if the advertiser with the lower value wins and only if his realized value is lower. This yields Condition \ref{eq:cond}. Now consider the case in which the SJ advertiser has the higher value. The SF advertiser cannot win after refinement since its virtual value $\vv'/2$ is lower than $\vv$, so in this case there is no inefficiency due to refinement. 

We summarize the total expected loss in welfare from inefficiencies due to refinement:
\begin{align}
\int_{\vpc_{\min}}^{\vpc_{\max}}\int_{2\vpc'}^{\bar{\vpc}}(\vpc/2-\vpc')
f(\vpc)f(\vpc')d\vpc d\vpc' \label{eq:typeI},
\end{align}
where $[\vpc_{\min},\vpc_{\max}]$ is the range of $\F$, and $\bar{\vpc}$ is defined to be the value such that the corresponding virtual value $\vv(\bar{\vpc})$ is equal to $2\vv'$.

\subsubsection{When refinement increases efficiency}

A similar analysis to the above shows that a necessary condition for refinement to increase efficiency is:
\begin{align}
\frac{1}{2}<\min\{\frac{\vpc'}{\vpc},\frac{\vv'}{\vv}\}\label{eq:condII}.
\end{align}
When Condition \ref{eq:condII} holds, then efficiency loss of magnitude $\vpc/2-\vpc'$ occurs with probability $1/2$: If the SF advertiser gets the higher value, condition \ref{eq:condII} implies that its realized value is \emph{lower}. As above, with no refinement the SF advertiser wins and with refinement the SJ advertiser wins. So in this case lack of refinement leads to an efficiency loss of $\vpc/2-\vpc'$.

The total expected loss in welfare from inefficiencies due to coarseness is
\begin{align}
\int_{\vpc_{\min}}^{\vpc_{\max}}\int_{\vpc_{\min}}^{2\vpc'}(\vpc/2-\vpc')
f(\vpc)f(\vpc')d\vpc d\vpc' \label{eq:typeII}.
\end{align}

\subsubsection{MHR example -- refinement increases efficiency}

\begin{example}
[MHR values]
Let $F$ be the uniform distribution over $[0,1]$. There are no inefficiencies due to refinement since Condition \ref{eq:cond} never holds:
$$
\frac{\vpc'}{\vpc} < \frac{\vv'}{\vv} = \frac{2\vpc'-1}{2\vpc-1} \iff \vpc' > \vpc,
$$
in contradiction to the definition of $\vpc,\vpc'$ by which $\vpc\ge\vpc'$.
\end{example}

\subsubsection{Non-MHR example -- refinement can reduce efficiency}

\begin{example}
[Non-MHR values]
\label{ex:welfare-regular}
Let $\trnc={10}^3$ be a truncating parameter and $\shft=-1$ be a shifting parameter. Let $\F$ be a variant of the equal revenue distribution achieved by truncating its support from $[1,\infty)$ to $[1,\trnc]$, and shifting it to the right by $|\shft|$ (if $\trnc=\infty$ and $\shft=0$ we get the standard equal revenue distribution). The truncation ensures that $\F$ has finite expectation so that Myerson's theory (Lemma \ref{lem:Myerson}) applies; the reason for shifting will become apparent below. The resulting distribution over the support $[2,\trnc+1]$ is
\begin{eqnarray*}
\F(\vpc)=\frac{\trnc}{\trnc-1}(1-\frac{1}{\vpc+\shft}).
\end{eqnarray*}
\end{example}

$\F$ is regular since its virtual valuation function $\vv(\vpc)=((\vpc+\shft)^{2}/\trnc)-\shft$ is increasing; note that since $b<0$, $\vv(\vpc)$ is also strictly positive for every $\vpc$. However, $\F$ is not MHR, since its inverse hazard rate function $\ihr(\vpc)=\vpc-((\vpc+\shft)^{2}/\trnc)+\shft$ increases with $\vpc$ in the range $[2, 501]$. 

By calculating and comparing the integrals in Equations \ref{eq:typeI} and \ref{eq:typeII}, we show that the expected loss due to refinement is higher than that due to coarseness, and so in this example, the flip-spread refinement that takes into account user location hurts the expected welfare of the revenue-optimal mechanism. Details of the calculation appear in the appendix. 

\subsubsection{Non-Flip-Spread Refinements}
\label{sec:non-flip-spread}

We now consider the slightly modified setting of Example \ref{ex:non-distinguish}. Recall that in this setting, the user location sometimes made the advertisers seem more ``similar'', and as a result, in more direct competition. To take advantage of the direct competition, the revenue-optimal mechanism sometimes allocates to the advertiser with lower realized value, increasing the expected revenue but decreasing the welfare, as demonstrated below.

First, we show that \emph{pointwise} over the bids, the flip-spread assumption is necessary in a strong sense for a result such as Theorem~\ref{thm:main}. That is, if the ad relevances draw closer together without flipping, for every non-degenerate mhr distribution, there exist valuations for which efficiency of the revenue-optimal mechanism falls with refinement. To construct such an example, for any refined relevance pair $\relone > \reltwo$, you could pick two values $v_1$ and $v_2 > v_1$ such that both virtual values are positive but near-zero, the realized value after refinement of the first advertiser ($\relone \cdot v_1$) dominates the second ($\reltwo \cdot v_2$), but the realized virtual value of the second advertiser ($\reltwo \cdot \vv_2$) slightly exceeds that of the first ($\relone \cdot \vv_1$). Thus after refinement, the revenue-optimal ranking disagrees with the efficient one. Now if the relevance pair $\relone, \reltwo$ is not flip-spread with respect to the unrefined relevance pair $\relavgone, \relavgtwo$, the unrefined revenue ranking will agree with the efficient one, and thus refinement causes efficiency loss.

We now show for a specific selection of relevance parameters, refinements, and value distributions, that efficiency loss can happen \emph{in expectation} when the flip spread assumption is violated. We use the relevance probabilities from Example~\ref{ex:non-distinguish}. Assume that the advertisers' values $\vpc_1,\vpc_2$ are drawn independently from the MHR uniform distribution over range $[3,5]$. The ranges of their realized values and virtual values are as follows: Since his relevance prediction is $0.8$ whether or not the prediction scheme is refined, advertiser 1's realized value range is $[2.4,4]$ and his virtual value range is $[0.8,4]$. As for advertiser 2, there are three cases to consider:
\begin{itemize}
\item $\q=\SF$ and the refined scheme $\QT$ is applied: realized value range is $[1.2,2]$ and virtual value range is $[0.4,2]$;
\item $\q=\neg\SF$ and the refined scheme $\QT$ is applied: realized value range is $[3\epsilon,5\epsilon]$ and virtual value range is $[\epsilon,5\epsilon]$;
\item The coarse scheme $\QTavg$ is applied: realized value range is $[0.3,0.5]$ and virtual value range is $[0.1,0.5]$.
\end{itemize}

Applying the refined prediction scheme $\QT$ which uses the location data lowers the expected welfare: Observe that the realized value of the advertiser 1 is always higher, and when $\QTavg$ is applied its virtual value is always higher as well, guaranteeing an efficient allocation. But because when $\q=\SF$ relevance predictions of the advertisers become closer, the range of advertiser 2's refined virtual value overlaps that of advertiser 1, so advertiser 2 sometimes wins despite this being inefficient. 

How often would we expect such inefficienies due to non flip-spread refinements in general? Assuming the setting of parameters as above, suppose we vary the second advertiser's relevance from $0$ towards $0.8$, and plot efficiency loss against the optimally efficient outcome; see Figure \ref{fig:varying-loss}. As the figure shows, several refinements that are not flip-spread would still result in an efficiency increase. For instance, any refinement where $\relavgtwo > 0.4$ and $\reltwo$ is in the range $[\relavgtwo, 0.8]$ would cause an efficiency increase. However, when $\relavgtwo < 0.4$ and $\reltwo$ is in the range $[\relavgtwo, 0.4]$, the corresponding refinement causes an efficiency drop. Thus, we would expect that if the relevances of advertisers were initially roughly comparable (recall that for the first advertiser, $\relone = \relavgone =0.8$), any refinement ought to improve expected efficiency.

\begin{figure}
\begin{centering}
\includegraphics[scale=0.3]{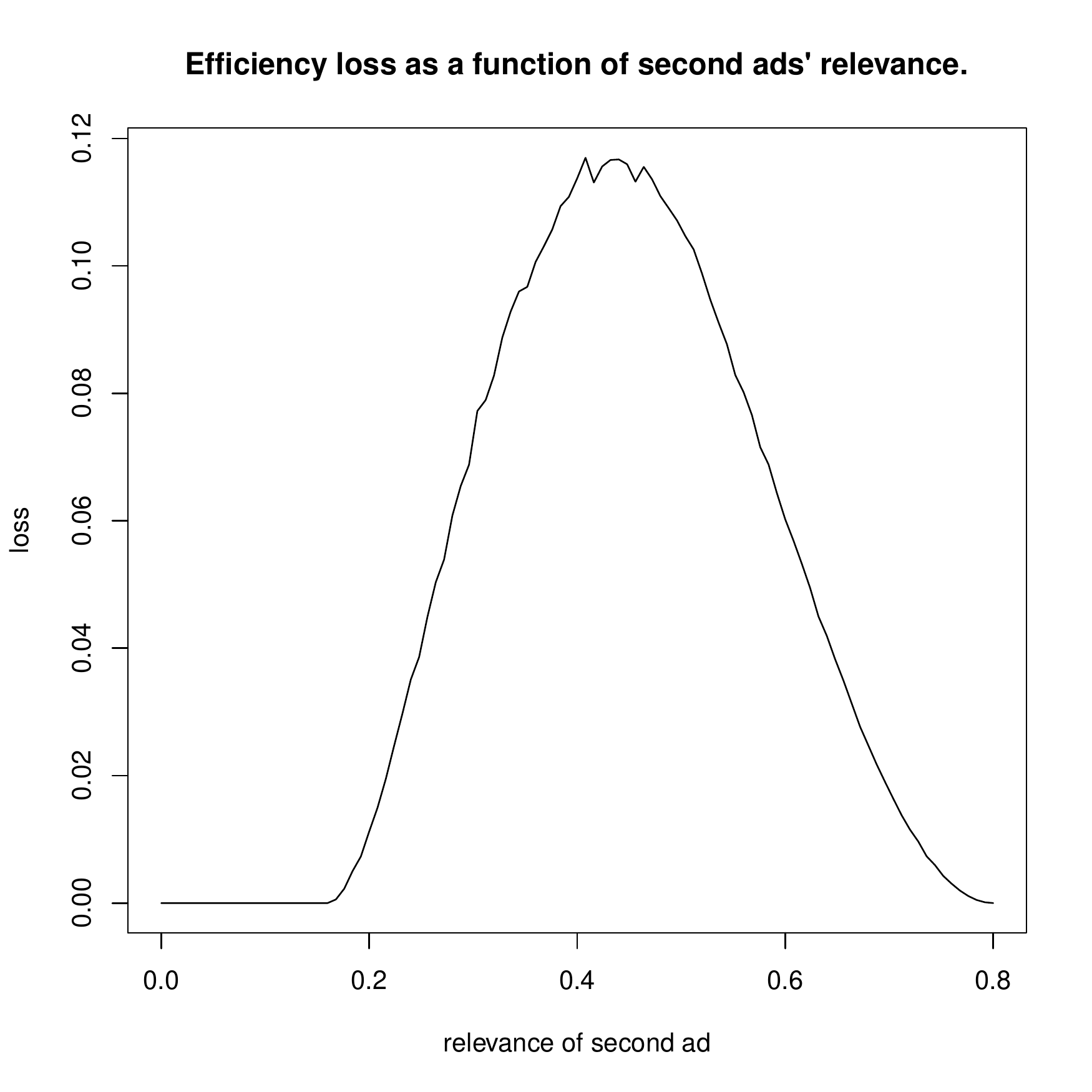}
\par\end{centering}
\label{fig:varying-loss}\caption{Efficiency loss of the revenue-optimal auction versus the efficiency-optimal auction for example in Section~\ref{sec:non-flip-spread} as a function of the relevance of the second advertiser.}
\end{figure}

\subsubsection{Non-i.i.d.~advertisers}

The previous example can be adjusted such that the resulting setting is completely equivalent, but now $\QT$ is a flip-spread refinement of $\QTavg$. This is by noticing that if advertisers are allowed to be non-i.i.d., a flip-spread refinement can make them more similar instead of more distinguished. For example, let advertiser 2's value be uniform over $[1.2,2]$ instead of $[3,5]$, and assume that finding out the user is in SF makes the advertisers' relevances flip from, say, $0.8,0.25$ to $0.8,1$. The ranges of their realized values however get closer, and the virtual value ranges overlap as above, leading to inefficiency.

\section{Should the Search Engine Refine Predictions?}
\label{sec:trade-off}

The previous section focused on data revelation that increases the
welfare of all Pareto optimal auctions, and identified sufficient
conditions under which signaling improves welfare. In this section we investigate the search engine's incentive to perform refinements. We assume that the search engine optimizes a fixed convex combination of revenue and efficiency (see Section~\ref{sec:pareto-vvbased} for justification of this assumption), and show that refinement only improves the mixed objective. This generalizes \cite{FJM+12}, who show the same result for the revenue-optimal mechanism. 

\begin{theorem}
[Refinement improves trade-off objective]
\label{thm:trade-off}
Consider a position auction and let $M$ be a Pareto optimal mechanism. Let $T$ be a prediction scheme that is a refinement of another prediction scheme $\bar{T}$. Then running $M$ with $T$ improves the objective of $M$ in expectation in comparison to running $M$ with $\bar{T}$. 
\end{theorem}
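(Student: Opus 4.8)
\subsection*{Proof proposal for Theorem~\ref{thm:trade-off}}

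The plan is to reduce the objective of $\M$ under any prediction scheme to an expected ``$\wt$-virtual surplus'' via Myerson's lemma, and then exploit the fact that $\M$'s own allocation rule maximizes this surplus \emph{pointwise}. By Lemma~\ref{lem:Pareto-mechanism}, $\M$ is the $\wt$-virtual value based mechanism for some $0\le\wt\le 1$ (with ironed $\wt$-virtual values if the prior is irregular). Since $\wtvv(\reval_i)=(1-\wt)\reval_i+\wt\vv(\reval_i)$, $\reval_i=\ctr_i\vpc_i$ and $\vv(\reval_i)=\ctr_i\vv(\vpc_i)$, we get $\wtvv(\reval_i)=\ctr_i\wtvv(\vpc_i)$, and exactly as in the proof of Lemma~\ref{lem:Pareto-mechanism} the objective of $\M$ run with a prediction scheme $X$ can be written $\mathbb{E}[\sum_{i,j}\sctr_j\,\ctr^X_i\,\wtvv(\vpc_i)\,x_{i,j}]$, where $\ctr^X_i$ is $i$'s relevance according to $X$, $x_{i,j}$ is the assignment $\M$ chooses under $X$, and the expectation is over the values and the query (hence over the parts of $X$). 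The second ingredient is that for any \emph{fixed} values $\vpcprof$ and any fixed relevance vector, $\M$'s allocation maximizes $\sum_{i,j}\sctr_j\,\ctr^X_i\,\wtvv(\vpc_i)\,x_{i,j}$ over all assignments of advertisers to slots — this is precisely the rearrangement-inequality argument (Lemma~\ref{lem:rearrangement}) used to pin down the allocation rule in the proof of Lemma~\ref{lem:Pareto-mechanism}.

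Given these two facts, here is the core of the argument. Fix a value profile $\vpcprof$ and, for each advertiser $i$, a coarse part $\qtavg_i$; these determine the relevances $\ctravg_i$ and the assignment $\bar{x}$ that $\M$ picks when run with $\QTavg$. The refinement $\QT$ additionally draws, jointly from the feature distribution, a subpart $\qt_i\subseteq\qtavg_i$ for each $i$, giving relevances $\ctr_i$ and the assignment $x$ that $\M$ picks when run with $\QT$. By the pointwise optimality above, for \emph{every} realization of the subparts
\[
\sum_{i,j}\sctr_j\,\ctr_i\,\wtvv(\vpc_i)\,x_{i,j}\ \ge\ \sum_{i,j}\sctr_j\,\ctr_i\,\wtvv(\vpc_i)\,\bar{x}_{i,j},
\]
since $\bar{x}$ is a feasible assignment that does not depend on the subparts. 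Taking the expectation over the subparts, pulling $\bar{x}_{i,j},\sctr_j,\wtvv(\vpc_i)$ out by linearity, and using the defining property of a refinement $\mathbb{E}_{\qt_i\subseteq\qtavg_i}[\ctr_{\qt_i}]=\ctr_{\qtavg_i}=\ctravg_i$, the right-hand side becomes $\sum_{i,j}\sctr_j\,\ctravg_i\,\wtvv(\vpc_i)\,\bar{x}_{i,j}$, which is exactly the objective contribution of $\M$ run with $\QTavg$ on this $(\vpcprof,\qtavg)$. Averaging the resulting inequality over $\vpcprof$ and the coarse parts shows that the objective of $\M$ with $\QT$ is at least that of $\M$ with $\QTavg$.

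The part needing the most care is bookkeeping rather than a genuine obstacle: the Myerson revenue identity holds only in expectation, and the refinement property is applied inside an expectation over a (possibly correlated) joint draw of all advertisers' subparts, so the whole argument must be run at the level of expectations — one may condition on $\vpcprof$ and the coarse parts, but not on the subparts — and it is exactly this averaging over subparts together with the linearity step that forces the ``in expectation'' qualifier in the statement (contrast the pointwise Theorem~\ref{thm:main}). Note that no MHR, i.i.d., or flip-spread hypothesis enters: the only inequality invoked is the optimality of $\M$'s own allocation rule for $\M$'s own (perceived) objective, so the result holds for an arbitrary position auction.
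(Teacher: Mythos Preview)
Your proof is correct and follows the same skeleton as the paper's: reduce the objective via Lemma~\ref{lem:Pareto-mechanism} (Myerson) to expected realized $\wt$-virtual surplus, then use the rearrangement-based pointwise optimality of $\M$'s allocation under the refined relevances to dominate any other assignment, in particular the one $\M$ would choose under $\QTavg$. Your write-up is more explicit than the paper's in one useful respect: you spell out the averaging step that invokes the refinement identity $\mathbb{E}_{\qt\subseteq\qtavg}[\ctr_\qt]=\ctr_{\qtavg}$ to identify $\mathbb{E}[\sum_{i,j}\sctr_j\,\ctr_i\,\wtvv(\vpc_i)\,\bar{x}_{i,j}]$ with the $\QTavg$-objective, whereas the paper's three-line proof leaves this implicit by tacitly treating the $\QT$-relevances as the ``true'' ones for each fixed query.
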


\begin{proof}
By Lemma \ref{lem:Pareto-mechanism}, $M$ should maximize the expected realized $\wt$-virtual surplus. For every fixed search query $q$, by Lemma~\ref{lem:rearrangement}, ranking the advertisers according to their realized $\wt$-virtual values maximizes the realized $\wt$-virtual surplus, and this is achieved by using $\QT$ instead of $\QTavg$. Taking expectation over $q$ completes the proof.
\end{proof}

The implication of the above theorem is that the seller should use as refined a prediction scheme as possible.

Notice that this result is less conditional than Theorem \ref{thm:main}, that is the MHR, i.i.d.~values and flip-spread refinement assumptions are not necessary. In fact, the fact that no condition of flip-spread is imposed on the refinement means that there must be a non-trivial trade-off between efficiency and revenue. In terms of efficiency, refining ad infinitum will not always be the right thing to do. But given that the search engine has already fixed a desired trade-off among efficiency and revenue, the best thing to do in terms of this trade-off is to use the $\wt$-Pareto optimal mechansim, and refine as far as possible. 

Our main corollary follows immediately from Theorem \ref{thm:trade-off}
and Theorem \ref{thm:main}, and states that prediction refinement can simultaneously increase the welfare and the expected objective of every Pareto optimal mechanism. In particular this holds for the Myerson mechanism.

\begin{corollary}
[Refinement improves welfare and trade-off]
\label{cor:Pareto-opt}
Consider an i.i.d., MHR position auction and let $\M$ be a Pareto optimal mechanism. Then for every coarse prediction scheme $\QTavg$ and flip-spread refinement $\QT$, running $\M$ with $\QT$ increases both welfare and $M$'s objective in expectation, in comparison to running $\M$ with $\QTavg$.
\end{corollary}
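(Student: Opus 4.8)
The plan is to obtain the corollary directly by combining the two results already established: the pointwise welfare guarantee of Theorem~\ref{thm:main} and the in-expectation objective guarantee of Theorem~\ref{thm:trade-off}. No new machinery is needed.

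First I would dispatch the welfare claim. Theorem~\ref{thm:main} asserts that for \emph{every} value profile $\vpcprof$ and \emph{every} assignment of the query-advertiser pairs to subparts of $\QT$, running $\M$ with the flip-spread refinement $\QT$ yields welfare at least as large as running $\M$ with $\QTavg$, where in both cases welfare is evaluated using the refined (true) relevances $\ctr_i$ that actually realize value. Since this inequality holds deterministically for each such realization, I would integrate it over the value distribution $\F^{\n}$ and over the distribution on subparts of each coarse part $\qtavg$ induced by the feature distribution. Monotonicity of expectation then gives $\mathbb{E}[\welfare(\M\text{ with }\QT)] \ge \mathbb{E}[\welfare(\M\text{ with }\QTavg)]$.

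Next I would dispatch the objective claim, which is immediate: Theorem~\ref{thm:trade-off} says precisely that running $\M$ with a refinement $\QT$ of $\QTavg$ (weakly) improves the expected value of $\M$'s convex-combination objective $(1-\wt)\mathbb{E}[\welfare]+\wt\mathbb{E}[\revenue]$, and it does so without any MHR, i.i.d., or flip-spread hypothesis, so it applies a fortiori under the stronger hypotheses of the corollary. Putting the two conclusions together yields the statement. The parenthetical remark about the Myerson mechanism follows because, for regular (hence MHR) values, the Myerson mechanism is the $\wt$-virtual value based mechanism with $\wt=1$ and is therefore Pareto optimal by Lemma~\ref{lem:Pareto-mechanism}.

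I do not expect a genuine obstacle here; the only point requiring care is lifting Theorem~\ref{thm:main} from its pointwise form to an expectation. One must verify that the welfare of $\M$ run with $\QTavg$ is still measured with respect to the refined relevances (so the two welfare quantities being compared sit on the same footing) — which is exactly how Theorem~\ref{thm:main} is framed — and that the expectation in both terms is taken over the same joint law of value profiles and subpart realizations. Once that bookkeeping is in place, the corollary is a one-line consequence of Theorems~\ref{thm:main} and~\ref{thm:trade-off}.
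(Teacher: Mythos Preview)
Your proposal is correct and matches the paper's approach: the paper itself states that the corollary ``follows immediately from Theorem~\ref{thm:trade-off} and Theorem~\ref{thm:main}'' and gives no further argument. Your additional care in lifting the pointwise welfare inequality of Theorem~\ref{thm:main} to an expectation, and in checking that both welfare quantities are evaluated against the refined relevances, simply makes explicit what the paper leaves implicit.
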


\section{Relation to Signaling}
\label{sec:signaling}

\subsection{Our Result Applied to Signaling}

In our model of relevance prediction, the features of the search query that determine the relevance scores can be viewed as private information held by the seller. This information, despite being unknown to the bidders, determines their realized values for the items. Auctions with seller information appear in the seminal work of Milgrom and Weber \cite{milweb-82} and have been extensively studied in the economic literature (see, e.g., \cite{Boa-09,LM-10}), and recently also in the computer science literature, where they're sometimes called \emph{probabilistic} auctions \cite{EFG+12,MS12,FJM+12}. This reflects the fact that when the seller does not reveal his information, the bidders only know its distribution and so the realized values are stochastic. Our model is a special case in which the seller information affects the realized values \emph{multiplicatively}. We also assume that all privately held information (of the advertisers and seller) is independent.

The above previous works focus on \emph{signaling schemes}, by which the seller communicates his information to the bidders. A signaling scheme maps the seller's information to a possibly random output signal (a deterministic scheme is called a clustering scheme - see \cite{GNS07}). Based on the observed output signal, the bidders adjust their realized values according to their posterior belief about the seller's information. Overloading notation, we denote the belief when no information is released by $\avgq$ and after seeing the output signal by $\qt$. Observe that prediction refinement is mathematically equivalent to signaling, since given the features of the search query, the seller can either use a general query part $\avgq$ to predict relevance, or use more information in form of a refined subpart $\qt$. In practice, prediction refinement avoids the communication costs involved in signaling information to the advertisers,%
\footnote{This is not the case in display ads, in which some engines report the user type to the advertisers and allow them to update their bids arbitrarily.
} %
and so may be desirable when many features are involved, provided that the seller knows how to translate the reported values into realized values.

We can now state our main result in the context of signaling, under the assumption that the seller's information determines multiplicative factors by which bidder values are scaled. Let $\QTavg$ determine the multiplicative factors without revealing information and let $\QT$ replace $\QTavg$ when information is revealed. We say that the signaling scheme is flip-spread if $\QT$ is a flip-spread refinement of $\QTavg$. Then a flip-spread signaling scheme for an i.i.d., MHR position auction and Pareto optimal mechanism improves both efficiency and the mechanism's objective in expectation.

Levin and Milgrom~\cite{LM-10} discuss how fine-grained signaling can affect \emph{display} ad auctions. One of the topics they study is how revenue is impacted by `thin auctions'. Our results apply to their setting (under the assumption that signals are multiplicatively combined).

\subsection{Relation to the Linkage Principle}

The fundamental linkage principle of \cite{milweb-82} states that full revelation of seller information is revenue-optimal (see also \cite[Chapters 6 and 7]{Kri-10}). In Appendix \ref{apx:LP} we highlight the differences of our work in terms of the model, mechanism and result statement.

\section{Open Questions}

Our results rely on the multiplicative assumption, which is natural in the context of position auctions. Can they be generalized to include additional effects that refinement may have on realized values, perhaps using some linear approximation? How can the notion of \emph{flip-spread} refinement be generalized?

We have shown that for the Myerson mechanism, flip-spread refinements improves both welfare and revenue. For which other mechanisms does this desirable property hold?

\bibliographystyle{acmsmall}
\bibliography{Auctions_Bib}

\appendix

\section{Appendix}

\subsection{Analysis of Example \ref{ex:welfare-regular} (Efficiency Loss Due to Refinement for Regular Distribution)}

In this appendix we provide the calculations for Example \ref{ex:welfare-regular}, which demonstrates that when advertisers' i.i.d.~values are drawn from a regular but non-MHR distribution, the expected welfare of Myerson can decrease following a flip-spread refinement. 

Define $\Delta$ to be the difference between the expected efficiency loss due to refinement and the expected efficiency loss due to coarseness,
when the values are drawn from the truncated shifted equal revenue
distribution defined above in the example. In this analysis we use the notation $s$ instead of $v$. Recall that $\bar{s}$ is such that $\varphi(\bar{s})=2\varphi(s')$.

\subsubsection{Step 1}

In this step we show that 
\[
\Delta=\frac{1}{2}\int_{2}^{H+1}\left((\bar{s}-2s')F(\bar{s})+s'F(s')-\int_{s'}^{\bar{s}}F(s)ds\right)f(s')ds'.
\]

Consider the difference for fixed $s'$:

\begin{eqnarray*}
\int_{s'}^{\frac{s'}{c}}(s'-cs)f(s)ds+\int_{\frac{s'}{c}}^{\bar{s}}(s'-cs)f(s)ds & = & \int_{s'}^{\bar{s}}(s'-cs)f(s)ds\\
 & = & s'\int_{s'}^{\bar{s}}f(s)ds-c\int_{s'}^{\bar{s}}sf(s)ds.
\end{eqnarray*}

We integrate by parts:
\begin{eqnarray*}
\int_{s'}^{\bar{s}}f(s)sds & = & F(s)s\mid_{s'}^{\bar{s}}-\int_{s'}^{\bar{s}}F(s)ds\\
 & = & \bar{s}F(\bar{s})-s'F(s')-\int_{s'}^{\bar{s}}F(s)ds.
\end{eqnarray*}

Plugging in:
\begin{eqnarray*}
c\bar{s}F(\bar{s})-cs'F(s')-c\int_{s'}^{\bar{s}}F(s)ds-s'F(\bar{s})+s'F(s') & =\\
(c\bar{s}-s')F(\bar{s})+(s'-cs')F(s')-c\int_{s'}^{\bar{s}}F(s)ds & =\\
c\left((\bar{s}-\frac{s'}{c})F(\bar{s})+(\frac{s'}{c}-s')F(s')-\int_{s'}^{\bar{s}}F(s)ds\right)
\end{eqnarray*}

\subsubsection{Finding $\bar{v}$}

\begin{eqnarray*}
\frac{(\bar{s}-1)^{2}+H}{H} & = & \frac{2(s'-1)^{2}+2H}{H}\\
\bar{s} & = & 1+\sqrt{2(s'-1)^{2}+H}
\end{eqnarray*}

\subsubsection{The expressions to integrate}

We use the following:
\begin{eqnarray*}
\int_{s'}^{\bar{s}}F(s)ds & = & \frac{H}{H-1}\int_{s'}^{\bar{s}}(1-\frac{1}{s-1})ds\\
 & = & \frac{H}{H-1}\left(s-\log(s-1)\mid_{s'}^{\bar{s}}\right)\\
 & = & \frac{H}{H-1}\left(\bar{s}-\log(\bar{s}-1)-s'+\log(s'-1)\right)
\end{eqnarray*}
So the inner function before multiplying by $f(s')$ is:
\begin{eqnarray*}
\frac{1}{2}\left((\bar{s}-2s')F(\bar{s})+s'F(s')-\int_{s'}^{\bar{s}}F(s)ds\right) & =\\
\frac{H}{2\left(H-1\right)}\left((\bar{s}-2s')(1-\frac{1}{\bar{s}-1})+s'(1-\frac{1}{s'-1})-\bar{s}+\log(\bar{s}-1)+s'-\log(s'-1)\right) & =\\
\frac{H}{2\left(H-1\right)}\left(\bar{s}-2s'-\frac{\bar{s}-2s'}{\bar{s}-1}+s'-\frac{s'}{s'-1}-\bar{s}+\log(\bar{s}-1)+s'-\log(s'-1)\right) & =\\
\frac{H}{2\left(H-1\right)}\left(\log(\bar{s}-1)-\frac{\bar{s}-2s'}{\bar{s}-1}-\frac{s'}{s'-1}-\log(s'-1)\right)
\end{eqnarray*}
Multiplying by $f(s')=\frac{H}{H-1}\cdot\frac{1}{(s-1)^{2}}$ and
plugging in $\bar{s}=1+\sqrt{2(s'-1)^{2}+H}$, we get three integrals
as follows (we write them without the factor of $\frac{1}{2}\left(\frac{H}{H-1}\right)^{2}$
for simplicity):
\begin{eqnarray*}
\frac{1}{2}\int_{1-b}^{H-b}\frac{\log(2(s'-1)^{2}+H)}{(s'-1)^{2}}ds' & = & \frac{1}{2}\int_{1}^{H}\frac{\log(2x^{2}+H)}{x^{2}}dx\\
\int_{1-b}^{H-b}\frac{1+\sqrt{2(s'-1)^{2}+H}-2s'}{(s-1)^{2}\sqrt{2(s'-1)^{2}+H}}ds' & = & \int_{1}^{H}\frac{\sqrt{2x^{2}+H}-2x-1}{x^{2}\sqrt{2x^{2}+H}}dx\\
 & = & \int_{1}^{H}x^{-2}dx-\int_{1}^{H}\frac{2x+1}{x^{2}\sqrt{2x^{2}+H}}dx\\
\int_{1-b}^{H-b}\frac{s'}{\left(s'-1\right)^{3}}+\frac{\log(s'-1)}{(s-1)^{2}}ds' & = & \int_{1}^{H}\frac{x+1}{x^{3}}+\frac{\log x}{x^{2}}dx
\end{eqnarray*}

\subsubsection{Calculating the integrals}

First integral:
\begin{eqnarray*}
\frac{\sqrt{2}\arctan(\frac{\sqrt{2}x}{\sqrt{H}})}{\sqrt{H}}-\frac{\log(H+2x^{2})}{2x}\mid_{1}^{H} & =\\
\frac{\sqrt{2}\arctan(\frac{\sqrt{2}H}{\sqrt{H}})}{\sqrt{H}}-\frac{\log(H+2H^{2})}{2H}-\frac{\sqrt{2}\arctan(\frac{\sqrt{2}}{\sqrt{H}})}{\sqrt{H}}+\frac{\log(H+2)}{2} & =\\
\frac{\sqrt{8H}\left(\arctan(\sqrt{2H})-\arctan(\sqrt{2/H})\right)-\log H-\log(1+2H)+H\log(H+2)}{2H} & = & 3.51487
\end{eqnarray*}
Second integral:
\[
1-\frac{1}{H}-\left(-\frac{\sqrt{H+2x^{2}}}{Hx}-\frac{2\log(\sqrt{H(H+2x^{2})}+H)}{\sqrt{H}}+\frac{2\log(\sqrt{H}x)}{\sqrt{H}}\mid_{1}^{H}\right)=0.7298
\]
Third integral:
\begin{eqnarray*}
-\frac{4x+2x\log x+1}{2x^{2}}\mid_{1}^{H} & = & \frac{5}{2}-\frac{4H+2H\log H+1}{2H^{2}}\\
 & = & 2.4911
\end{eqnarray*}

\subsubsection{Putting everything together}

\[
\frac{1}{2}\left(\frac{H}{H-1}\right)^{2}\left(3.51487-0.7298-2.4911\right)=0.1473.
\]
The final answer is positive, and so we have shown that the inefficiencies
due to refinement surpass those due to coarseness in this example. 

\subsection{Relation to the Linkage Principle}
\label{apx:LP}
The Milgrom-Weber model is more general than ours in that realized values depend on the seller's information in an arbitrary way (and can also depend on other bidders' private values). This dependence however must be the same for all bidders. This symmetry requirement is crucial, and without it releasing seller information may actually harm revenue \cite[Chapter 8]{PR-99,Kri-10}. In contrast, an inherent feature of our model is that advertisers can be asymmetric or even antisymmetric in the way their relevance changes as more features are used for prediction. We remark that the Milgrom-Weber model also allows a certain form of correlation among private information called \emph{affiliation} (see \cite[Appendix D]{Kri-10} and \cite[FKG inequality]{AS08}). 

Milgrom and Weber do not consider direct revelation mechanisms as we do, since in their model the information of the seller and other bidders can affect realized values arbitrarily. Instead they analyze first price, second price and English auctions assuming symmetric (possibly untruthful) equilibruim bidding. Their result that using full seller information is optimal holds even for the second price auction with no reserve. This is not the case in our model, where we show that the combined objective of a Pareto optimal mechanism increases given refined prediction.

\end{document}